\titlespacing*{\paragraph}{0pt}{6pt}{1em} 
\newcommand{\CC}{\ensuremath{\mathscr{C}}}
\newcommand{\SD}{\ensuremath{\mathrm{SD}}}
\newcommand{\one}{\mathbbm{1}}
\newcommand{\Fqn}{\F_q^n}
\newcommand{\carf}[1]{\chi_{#1}}
\newcommand{\car}[2]{\chi_{#1}(#2)}
\newcommand{\F}{\mathbb{F}}
\newcommand{\Z}{\mathbb{Z}}
\newcommand{\Iint}[2]{\llbracket #1 , #2 \rrbracket}
\DeclareMathOperator{\tr}{tr}
\newcommand{\trsp}[1]{{#1}^{\intercal}}
\newcommand{\eqdef}{\mathop{=}\limits^{\triangle}}
\newcommand{\Unif}{\leftarrow}
\newcommand{\ie}{\textit{i.e.}}
\newcommand{\COMMENT}[1]{}
\newcommand{\ket}[1]{|#1\rangle}
\newcommand{\bra}[1]{\langle#1|}
\newcommand{\braket}[2]{\langle #1 | #2 \rangle}
\newcommand{\norm}[1]{\left\lVert #1 \right\rVert}
\newcommand{\SIS}{\mbox{SIS}}
\newcommand{\QDP}{\ensuremath{\textrm{QDP}}}
\renewcommand{\aa}{\mathcal{A}}
\newcommand{\mat}[1]{\ensuremath{\boldsymbol{#1}}}
\newcommand{\zerov}{\mathbf{0}}
\newcommand{\cv}{{\mat{c}}}
\newcommand{\ev}{\mat{e}}
\newcommand{\sv}{{\mat{s}}}
\newcommand{\uv}{\mat{u}}
\newcommand{\xv}{{\mat{x}}}
\newcommand{\yv}{{\mat{y}}}
\newcommand{\zv}{{\mat{z}}}
\newcommand{\ostar}{\mathbin{\mathpalette\make@circled\star}}
\newcommand{\make@circled}[2]{%
	\ooalign{$\m@th#1\smallbigcirc{#1}$\cr\hidewidth$\m@th#1#2$\hidewidth\cr}%
}
\newcommand{\smallbigcirc}[1]{%
	\vcenter{\hbox{\scalebox{0.77778}{$\m@th#1\bigcirc$}}}%
}
\newcommand{\Gm}{\ensuremath{\mathbf{G}}}
\newcommand{\Hm}{\ensuremath{\mathbf{H}}}
\renewcommand{\vec}[1]{\mathbf{#1}}
\newcommand{\wpsi}{\widetilde{\psi}}
\newcommand{\tv}{\vec{t}}
\newcommand{\hf}{\widehat{f}}
\newcommand{\Time}{\mathrm{Time}}
\newcommand{\hu}{\widehat{u}}
\newcommand{\ttau}{\widetilde{\tau}}
\newcommand{\RS}{\ensuremath{\textrm{RS}}}
\newtheorem{theorem}{Theorem}
\newtheorem{definition}{Definition}
\newtheorem{lemma}{Lemma}
\newtheorem{proposition}{Proposition}
\newtheorem{corollary}{Corollary}
\newtheorem{claim}{Claim}
\newtheorem{problem}{Problem}
\renewcommand{\C}{\mathcal{C}}
\newcommand{\hzsy}{{z_{\sv,\yv}}}
\newcommand{\ohzsy}{\overline{{z_{\sv,\yv}}}}
\newcommand{\Pdec}{\ensuremath{P_{\textrm{Dec}}}}
\renewcommand{\and}{\mbox{ and }}
\renewcommand{\CC}{\ensuremath{\textrm{CC}}}
\newcommand{\ICC}{\ensuremath{\textrm{ICC}}}
\newcommand{\OPI}{\ensuremath{\textrm{OPI}}}
\title{OPI × Soft Decoders}
\author{André Chailloux}
\date{\today}
\begin{document}
	\maketitle
\begin{abstract}
	In recent years, a particularly interesting line of research has focused on designing quantum algorithms for code and lattice problems inspired by Regev’s reduction. The core idea is to use a decoder for a given code to find short codewords in its dual. For example, Jordan et al.~\cite{JSW+24} demonstrated how structured codes can be used in this framework to exhibit some quantum advantage. In particular, they showed how the classical decodability of Reed–Solomon codes can be leveraged to solve the Optimal Polynomial Intersection (OPI) problem quantumly. This approach was further improved by Chailloux and Tillich~\cite{CT25} using stronger soft decoders, though their analysis was restricted to a specific setting of OPI.
	
	In this work, we reconcile these two approaches. We build on a recent formulation of the reduction by Chailloux and Hermouet~\cite{CH25} in the lattice-based setting, which we rewrite in the language of codes. With this reduction, we show that the results of Jordan et al. can be recovered under Bernoulli noise models, simplifying the analysis. This characterization then allows us to integrate the stronger soft decoders of Chailloux and Tillich into the OPI framework, yielding improved algorithms.
\end{abstract}
\tableofcontents
\newpage
\section{Introduction} Regev's reduction~\cite{Reg05} is one of the foundational results of lattice-based cryptography and is a quantum reduction between the Short Integer Solution (SIS) problem and the Learning With Errors (LWE) problem. The reduction is actually to a somewhat easier problem than LWE, where the errors are in quantum superposition. This has been leveraged in~\cite{BKSW18} where they showed relations between these problems and variants of the Dihedral Coset Problem. Then, Chen, Liu, and Zhandry~\cite{CLZ22} showed how to use these ideas not to obtain complexity reductions but rather to construct new quantum algorithms. They give an efficient quantum algorithm for $\SIS_\infty$ in a somewhat contrived parameter regime. While this algorithm has been dequantized~\cite{KOW25}, it paved the way for several new quantum algorithms. Notably, Yamakawa and Zhandry~\cite{YZ24} showed cases that exhibit a provable quantum advantage with this approach in the Random Oracle Model. There has been a series of recent works that studied the algorithmic aspects as well as extensions to coding-theoretic settings~\cite{DRT24,DFS24,CT24,JSW+24,CHL+25,CT25,BCT25,CH25,Hill25,GJ25,SLS+25}.

 In particular, Jordan et al.~\cite{JSW+24} studied structured codes in order to propose simple problems with a large quantum advantage. They looked at particular instances of LDPC codes and Reed--Solomon codes, and the resulting problems they solved can be phrased in terms of the MAX LINSAT problem (from LDPC codes) or the Optimal Polynomial Interpolation (OPI) problem (from Reed--Solomon codes). Chailloux and Tillich~\cite{CT25} presented a more general reduction for these algorithms that tolerates errors in the decoders. This allowed them to use more powerful decoders for Reed--Solomon codes. However, they could not fully integrate these decoders into the framework of~\cite{JSW+24}. In~\cite{CH25}, the authors extended the main reduction of~\cite{CT25} to make it both more general and easier to use in the context of Euclidean lattices.
 
 In this work, we rephrase the reduction of~\cite{CH25} in terms of codes. In light of this new reduction, we reprove and extend some of the results of~\cite{JSW+24} and~\cite{CT25} in an arguably simpler fashion and show how to use the more powerful decoders studied in~\cite{CT25} within the framework of~\cite{JSW+24}, thus combining the strengths of both approaches. We recover and extend the results of the Decoded Quantum Interferometry algorithm with explicit error functions $f$, and we also extend the results of~\cite{CT25} to more general instances of $\ICC$. More precisely: 
 \begin{enumerate} \setlength\itemsep{-0.08cm}
 	\item For MAX Linsat, we recover the same results as in~\cite{JSW+24} with product error functions $f$. 
 	\item In the case of OPI, we recover the results of Decoded Quantum Interferometry with product error functions $f$. With this new characterization, we can incorporate the stronger decoders of~\cite{CT25} in this framework, which gives improved results for the OPI problem. 
 \end{enumerate} 
 All of these results are obtained in an arguably simpler framework which can easily encompass other choices of codes and decoding problems.

\section{Preliminaries}
\subsection{Basic notations and probability theory} A probability function $p$ on a set $U$ is a function $p : U \rightarrow \mathbb{R}_+$ such that $\sum_{x \in U} p(x) = 1$. For a function $f : U \rightarrow \mathbb{C}$, we have $\norm{f}_2 = \sqrt{\sum_{x \in U} |f(x)|^2}$. We will work in a finite field $\F_q$ for any $q$ that is a prime power. Lowercase bold letters $\xv,\yv$ denote (row) vectors with coefficients in $\F_q$, and uppercase bold letters $\Gm,\Hm$ denote matrices with coefficients in $\F_q$. 

 \begin{claim} Let \( X_1, \ldots, X_n \) be independent Bernoulli random variables with \( \Pr[X_i = 0] = 1 - \tau \) and \( \Pr[X_i = 1] = \tau \) for all \( i \). Let $S_n = \sum_{i = 1}^n X_i$. For any $\ttau < \tau$, \[ \Pr[ S_n \le \ttau n] \leq 2e^{-2n(\tau - \ttau)^2}. \] \end{claim}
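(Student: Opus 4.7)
The plan is to invoke Hoeffding's inequality directly. Each $X_i$ is bounded in $[0,1]$ with $\mathbb{E}[X_i] = \tau$, so the centered variables $X_i - \tau$ are independent, mean-zero, and lie in an interval of length $1$. The one-sided Hoeffding bound then gives
\[
\Pr[S_n - n\tau \le -n\delta] \;\le\; \exp(-2n\delta^2)
\]
for every $\delta > 0$. Applying this with $\delta = \tau - \ttau$, which is positive by the hypothesis $\ttau < \tau$, yields $\Pr[S_n \le \ttau n] \le e^{-2n(\tau - \ttau)^2}$, already slightly stronger than the stated bound (the factor $2$ in the claim is a harmless cushion, perhaps inserted to match the two-sided version used elsewhere in the paper).

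If one prefers to avoid citing Hoeffding as a black box, the same inequality can be obtained from scratch via the standard Chernoff recipe specialized to the Bernoulli case. For any $\lambda > 0$, Markov's inequality applied to $e^{-\lambda S_n}$ gives
\[
\Pr[S_n \le \ttau n] \;\le\; e^{\lambda \ttau n}\bigl(1 - \tau + \tau e^{-\lambda}\bigr)^{n}.
\]
Taking logarithms and using Hoeffding's lemma to bound $\log(1 - \tau + \tau e^{-\lambda}) \le -\lambda \tau + \lambda^2/8$ reduces the exponent to a quadratic in $\lambda$, and the optimal choice $\lambda = 4(\tau - \ttau)$ yields the exponent $-2n(\tau - \ttau)^2$.

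There is no real obstacle to overcome: the statement is a textbook concentration inequality and both routes are short. The only minor choice is whether to present it as an immediate corollary of a named theorem or to include the two-line Chernoff derivation for self-containment; given that the rest of the paper is concerned with more substantial coding-theoretic arguments, a one-line appeal to Hoeffding seems preferable.
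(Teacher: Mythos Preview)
Your proposal is correct and matches the paper's treatment: the claim is stated in the preliminaries without proof, and when it is applied (in the proof of Proposition~\ref{Proposition:2}) the paper simply says ``From Hoeffding's inequality''. Your observation that the one-sided bound already gives $e^{-2n(\tau-\ttau)^2}$ without the factor $2$ is also accurate.
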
 
\subsection{Finite fields and Fourier transform.} For $\xv = (x_i)_{i \in \Iint{1}{n}} \in \F_q^n$ and $\yv = (y_i)_{i \in \Iint{1}{n}} \in \F_q^n$, we define their inner product $\xv \cdot \yv = \sum_i x_i y_i$, where the sum and multiplication are over $\F_q$. We now give a brief overview of the characters of $\F_q$, following~\cite{CT25}. \begin{definition} Let $q = p^s$ for a prime integer $p$ and an integer $s \ge 1$. The characters of $\F_q$ are the functions $\carf{y} : \F_q \rightarrow \mathbb{C}$ indexed by elements $y \in \F_q$, defined as follows: \begin{eqnarray*} \car{y}{x} & \eqdef & e^{\frac{2i \pi \tr(x \cdot y)}{p}}, \quad \text{with} \\ \tr(a) & \eqdef & a + a^p + a^{p^2} + \dots + a^{p^{s-1}}. \end{eqnarray*} where the product $x \cdot y$ corresponds to multiplication in $\F_q$. We extend the definition to vectors $\xv,\yv \in \F_q^n$ as follows: $$ \car{\yv}{\xv} \eqdef \prod_{i = 1}^n \car{y_i}{x_i} = e^{\frac{2i \pi \tr(\xv \cdot \yv)}{p}}.$$ \end{definition} 

Notice that inner product $\xv \cdot \yv$ is equal to the vector multiplication $\xv \trsp{\yv}$. This means that for $\xv \in \F_q^k, \yv \in \F_q^n$ and $\Gm \in \F_q^{k \times n}$, we have  $\xv \Gm \cdot \yv = \xv \cdot \yv \trsp{\Gm}$ hence $\chi_{\yv}(\xv \Gm) = \chi_{\yv \trsp{\Gm}}(\xv)$. 
 Characters have many desirable properties that we can use for our calculations. \begin{proposition}\label{prop:characters} The characters $\carf{\yv} : \Fqn \rightarrow \mathbb{C}$ have the following properties: \begin{enumerate}\setlength\itemsep{-0.2em} \item (Group homomorphism). For all $\yv \in \Fqn$, $\carf{\yv}$ is a group homomorphism from $(\Fqn,+)$ to $(\mathbb{C},\cdot)$, meaning that for all $\xv, \xv' \in \Fqn$, $\car{\yv}{\xv + \xv'} = \car{\yv}{\xv} \cdot \car{\yv}{\xv'}$. \item (Symmetry). For all $\xv, \yv \in \Fqn$, $\car{\yv}{\xv} = \car{\xv}{\yv}$. \item (Orthogonality of characters). The characters are orthogonal functions: for all $\xv, \xv' \in \Fqn$, \\ $\sum_{\yv \in \Fqn} \car{\yv}{\xv}\overline{\car{\yv}{\xv'}} = q^n \delta_{\xv,\xv'}$. In particular, $\sum_{\yv \in \Fqn} |\car{\yv}{\zerov}|^2 = q^n$ and for all $\xv \in \Fqn\setminus\{\zerov\}$, $\sum_{\yv \in \Fqn} \car{\yv}{\xv} = 0$. \end{enumerate} \end{proposition}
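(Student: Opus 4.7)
The plan is to prove the three items in sequence, relying on two basic facts about the trace map $\tr : \F_q \to \F_p$: (i) its $\F_p$-linearity (which itself follows from the Frobenius $a \mapsto a^p$ being a ring endomorphism of $\F_q$), and (ii) its surjectivity onto $\F_p$ as a nonzero $\F_p$-linear form. Only (ii) is nontrivial; it is the classical statement that the trace of a finite extension of a finite field is not identically zero.

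For the group homomorphism property, I would simply unfold the definition: since $(\xv+\xv')\cdot\yv = \xv\cdot\yv + \xv'\cdot\yv$ in $\F_q$ and $\tr$ is additive,
$$\car{\yv}{\xv + \xv'} = e^{2i\pi\,\tr(\xv\cdot\yv + \xv'\cdot\yv)/p} = e^{2i\pi\,\tr(\xv\cdot\yv)/p} \cdot e^{2i\pi\,\tr(\xv'\cdot\yv)/p} = \car{\yv}{\xv}\car{\yv}{\xv'}.$$
For symmetry, commutativity of multiplication in $\F_q$ gives $\xv\cdot\yv = \yv\cdot\xv$ as elements of $\F_q$, so $\tr(\xv\cdot\yv) = \tr(\yv\cdot\xv)$, and hence $\car{\yv}{\xv} = \car{\xv}{\yv}$.

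For orthogonality, the homomorphism property combined with $\overline{\car{\yv}{\xv'}} = \car{\yv}{-\xv'}$ (since characters take values in the group of $p$-th roots of unity) gives
$$\sum_{\yv \in \Fqn} \car{\yv}{\xv}\overline{\car{\yv}{\xv'}} = \sum_{\yv \in \Fqn} \car{\yv}{\xv - \xv'},$$
so it suffices to prove that $\sum_{\yv} \car{\yv}{\zv}$ equals $q^n$ when $\zv = \zerov$ and $0$ otherwise. The $\zv = \zerov$ case is immediate since every summand equals $1$. For $\zv \neq \zerov$, I would use the standard translation trick: if some $\yv_0$ satisfies $\car{\yv_0}{\zv} \neq 1$, then the substitution $\yv \mapsto \yv - \yv_0$ yields $\car{\yv_0}{\zv}\cdot \sum_{\yv}\car{\yv}{\zv} = \sum_{\yv}\car{\yv}{\zv}$, forcing the sum to vanish.

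The only genuinely delicate step is therefore producing such a $\yv_0$ when $\zv \neq \zerov$. I would pick a coordinate $i$ with $z_i \neq 0$ and look for $\yv_0$ supported only on that coordinate, reducing the problem to showing that $y \mapsto \tr(z_i \cdot y)$ is not identically zero on $\F_q$. Since $y \mapsto z_i \cdot y$ is a bijection of $\F_q$, this boils down to fact (ii) above. I expect this to be the main (and essentially only) conceptual obstacle; everything else is a direct consequence of additivity of $\tr$ and the character/$p$-th-root-of-unity dictionary.
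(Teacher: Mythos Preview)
Your argument is correct and is the standard textbook proof of these character identities. Note, however, that the paper does not actually prove this proposition: it is stated as a list of well-known properties (citing \cite{CT25} for the setup) and no proof environment follows it, so there is nothing to compare against beyond observing that your proof supplies what the paper omits.
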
 \begin{definition} For a function $f : \F_q^n \rightarrow \mathbb{C}$, we define its Fourier transform as $\hf(\xv) = \frac{1}{\sqrt{q^n}} \sum_{\yv \in \F_q^n} \chi_{\xv}(\yv) f(\yv)$. \end{definition} \begin{claim}[Parseval's Identity] For any $f : \F_q^n \rightarrow \mathbb{C}$, we have $\norm{f}_2 = \norm{\hf}_2$. \end{claim}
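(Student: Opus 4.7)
The plan is to expand $\norm{\hf}_2^2$ directly from the definition and collapse the double sum using the orthogonality relation from Proposition~\ref{prop:characters}. Working with squared norms is natural here because both sides are nonnegative, so equality of squares implies equality of norms, and the right-hand side becomes a clean double sum of characters that the orthogonality identity is designed to handle.

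Concretely, I would start by writing
\[
\norm{\hf}_2^2 = \sum_{\xv \in \F_q^n} \hf(\xv)\overline{\hf(\xv)} = \frac{1}{q^n} \sum_{\xv \in \F_q^n} \sum_{\yv, \yv' \in \F_q^n} \chi_{\xv}(\yv)\,\overline{\chi_{\xv}(\yv')}\, f(\yv)\overline{f(\yv')},
\]
then swap the order of summation to pull the $\xv$-sum inside:
\[
\norm{\hf}_2^2 = \frac{1}{q^n} \sum_{\yv, \yv'} f(\yv)\overline{f(\yv')} \left( \sum_{\xv \in \F_q^n} \chi_{\xv}(\yv)\,\overline{\chi_{\xv}(\yv')} \right).
\]
The inner parenthesized sum is almost the orthogonality relation stated in Proposition~\ref{prop:characters}, except with the roles of the character index and argument interchanged. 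This is precisely where the symmetry property $\chi_{\yv}(\xv) = \chi_{\xv}(\yv)$ will be used: applying it to both factors rewrites the inner sum as $\sum_{\xv} \chi_{\yv}(\xv)\,\overline{\chi_{\yv'}(\xv)}$, which by orthogonality equals $q^n \delta_{\yv,\yv'}$.

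Plugging this back in collapses the double sum to the diagonal, yielding $\norm{\hf}_2^2 = \sum_{\yv} |f(\yv)|^2 = \norm{f}_2^2$, and taking square roots concludes the proof. There is no real obstacle here: the whole argument is essentially bookkeeping on top of the orthogonality relation, and the only subtlety worth flagging in the write-up is the use of symmetry to apply orthogonality in the correct variable. Everything else is routine manipulation of finite sums.
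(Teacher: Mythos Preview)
Your argument is correct and is exactly the standard derivation one would expect: expand $\norm{\hf}_2^2$, swap the order of summation, and collapse using orthogonality of characters (with the symmetry $\chi_{\xv}(\yv)=\chi_{\yv}(\xv)$ to put the inner sum in the form of Proposition~\ref{prop:characters}(3)). Note, however, that the paper does not actually prove this claim; Parseval's identity is stated there as a standard fact without proof, so there is no alternative approach in the paper to compare against.
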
 \begin{claim} Let $\C$ be a $q$-ary linear code. Then $$ \sum_{\cv \in \C} \chi_{\yv}(\cv) = \begin{cases} |\C| & \text{if } \yv \in \C^{\bot},\\ 0 & \text{otherwise.} \end{cases}$$ \end{claim}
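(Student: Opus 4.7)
The plan is to split on whether $\yv \in \C^{\perp}$ or not. If $\yv \in \C^{\perp}$, then $\yv \cdot \cv = 0$ in $\F_q$ for every $\cv \in \C$, hence $\chi_{\yv}(\cv) = 1$ on the nose and the sum equals $|\C|$. The remaining case, $\yv \notin \C^{\perp}$, is where the actual work lies.

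For that case I would apply the standard character-sum annihilation trick. If I can produce a single $\cv_0 \in \C$ with $\chi_{\yv}(\cv_0) \neq 1$, then by the group-homomorphism property from Proposition~\ref{prop:characters}, combined with the linearity of $\C$ (so that $\cv_0 + \cv$ ranges over $\C$ as $\cv$ does), I get
$$ \chi_{\yv}(\cv_0) \sum_{\cv \in \C} \chi_{\yv}(\cv) \;=\; \sum_{\cv \in \C} \chi_{\yv}(\cv_0+\cv) \;=\; \sum_{\cv' \in \C} \chi_{\yv}(\cv'). $$
Writing $S$ for the target sum, this reads $(\chi_{\yv}(\cv_0)-1)\,S=0$, forcing $S=0$.

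The one delicate step, and the main obstacle in the argument, is producing such a $\cv_0$: it is \emph{not} enough to pick $\cv_0 \in \C$ with $\yv \cdot \cv_0 \neq 0$, because the character's value depends on $\tr(\yv \cdot \cv_0) \bmod p$, which can vanish even when $\yv \cdot \cv_0$ does not. The clean way around this is to look at the $\F_q$-linear map $\phi : \C \to \F_q$, $\phi(\cv) = \yv \cdot \cv$. Its image is an $\F_q$-subspace of the one-dimensional $\F_q$-vector space $\F_q$, so it is either $\{0\}$ (which is precisely the condition $\yv \in \C^{\perp}$) or all of $\F_q$. Under our hypothesis it is the whole of $\F_q$. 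Since the trace $\tr : \F_q \to \F_p$ is a nonzero $\F_p$-linear map, it is surjective; picking $a \in \F_q$ with $\tr(a) \not\equiv 0 \pmod p$ and any $\cv_0 \in \phi^{-1}(a)$ yields $\chi_{\yv}(\cv_0) \neq 1$, which is exactly the ingredient needed to close the argument.
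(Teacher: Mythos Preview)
Your argument is correct, including the care you took with the trace: the observation that $\phi:\C\to\F_q$ is $\F_q$-linear and hence surjective when $\yv\notin\C^\perp$, combined with the nonvanishing (hence surjectivity) of $\tr:\F_q\to\F_p$, cleanly produces the needed $\cv_0$ with $\chi_{\yv}(\cv_0)\neq 1$, after which the standard shift-by-$\cv_0$ trick finishes it.

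As for comparison: the paper does not supply a proof of this claim at all; it is listed among the preliminary facts about characters and linear codes and treated as standard. Your write-up is a complete and self-contained justification, and in fact goes beyond what many references bother to spell out by explicitly handling the case where $q$ is a nontrivial prime power (where $\yv\cdot\cv_0\neq 0$ does not by itself imply $\tr(\yv\cdot\cv_0)\not\equiv 0$).
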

 \subsection{Linear Codes} A $q$-ary linear code $\C$ of dimension $k$ and length $n$ is characterized by a full-rank generating matrix $\Gm \in \F_q^{k \times n}$ or equivalently by a full-rank parity-check matrix $\Hm \in \F_q^{(n-k) \times n}$, and we write \[ \C = \{\xv \Gm : \xv \in \F_q^k\} \quad \text{or} \quad \C = \{\yv \in \F_q^n : \Hm \trsp{\yv} = \zerov\}.\] Each code $\C$ has an associated dual code $\C^\bot = \{\yv \in \F_q^n : \forall \cv \in \C, \ \yv \cdot \cv = 0\}$. Any generating matrix of $\C$ is a parity-check matrix of $\C^{\bot}$. For a string $\uv \in \F_q^{n-k}$ called syndrome, we define the coset $\C_{\uv} = \{\yv \in \F_q^n : \Hm \trsp{\yv} = \trsp{\uv}\}$. For $\uv \in \F_q^{k}$, called a dual syndrome, we also have the dual coset $\C^\bot_{\uv} = \{\yv \in \F_q^n : \Gm \trsp{\yv} = \trsp{\uv}\}$. We will be particularly interested in full-support Reed--Solomon codes. The full-support Reed--Solomon code of length $n$ and dimension $k$ is the code $$ \RS_k = \{\left(P(\alpha_1),\dots,P(\alpha_q)\right) : P \in \F_q[X],\ \deg(P) < k\}, $$ where $\{\alpha_1,\dots,\alpha_q\} = \F_q$. Note that for these codes we have $n = q$. An important property of the full-support Reed--Solomon code is that its dual is also a full-support Reed--Solomon code. \begin{claim} $\RS_k^\bot = \RS_{n-k}$. \end{claim}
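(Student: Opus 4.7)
The plan is to prove the two inclusions separately, and in fact it suffices to show only one inclusion together with a dimension count. First I would observe that $\dim \RS_k = k$: the evaluation map sending a polynomial of degree $< k \leq q$ to its vector of values at the $q$ points of $\F_q$ is injective, since a nonzero polynomial of degree $< k$ cannot vanish at $q$ distinct points. Consequently $\dim \RS_k^{\bot} = n - k = \dim \RS_{n-k}$, and it is enough to establish the inclusion $\RS_{n-k} \subseteq \RS_k^{\bot}$.

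For the inclusion, I would take $P, Q \in \F_q[X]$ with $\deg P < k$ and $\deg Q < n - k$ and compute the inner product of their evaluation vectors as $\sum_{\alpha \in \F_q} P(\alpha) Q(\alpha)$. The polynomial $PQ$ has degree at most $(k-1) + (n-k-1) = n - 2 = q - 2$. Expanding $PQ$ as a sum of monomials, the inclusion reduces to the identity
\[ \sum_{\alpha \in \F_q} \alpha^j = 0 \quad \text{for every } 0 \le j \le q - 2. \]

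The main technical step is this monomial identity. For $j = 0$ the sum equals $q \cdot 1 = 0$ in $\F_q$, since $\F_q$ has characteristic $p$ dividing $q$. For $1 \le j \le q - 2$, I would use that $\F_q^{*}$ is a cyclic group of order $q - 1$; fixing a generator $g$, the sum over $\F_q^{*}$ becomes a geometric series $\sum_{i=0}^{q-2} (g^j)^i$, whose numerator $g^{j(q-1)} - 1$ vanishes while the denominator $g^j - 1$ does not, because $1 \le j \le q - 2$ forces $g^j \neq 1$. Adding the $\alpha = 0$ contribution, which is zero for $j \geq 1$, gives the claim.

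The only thing to watch is the sharpness of the degree bound: the identity breaks precisely at $j = q - 1$, and the calculation relies on $\deg(PQ) \leq n - 2$, which is exactly where the hypothesis $\deg Q < n - k$ enters. There is no substantial obstacle beyond this bookkeeping: the proof reduces to a dimension count together with orthogonality of monomial evaluations on the full field.
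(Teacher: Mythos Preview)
Your proof is correct and is the standard argument for this well-known fact. The paper itself does not supply a proof of this claim; it is stated without justification as a known property of full-support Reed--Solomon codes, so there is nothing in the paper to compare against beyond noting that your argument fills in exactly the expected details.
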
 \subsection{Code problems and OPI} We will be interested in the following code problems. \begin{problem}[Syndrome Decoding Problem $\SD(\C,p)$] ~\\ \textbf{Given:} $(\C,\cv+\ev,p)$ where $\C$ is a $q$-ary linear code of dimension $k$ and length $n$ for which we have a description (e.g., via a generating matrix $\Gm$), $\cv \Unif \C$, and $\ev$ is distributed according to $p : \F_q^n \rightarrow \mathbb{R}_+$. \\ \textbf{Goal:} Find $\cv$. \end{problem} This problem has been extended to the case of codes in~\cite{CT24} where the errors are considered in quantum superposition, \begin{problem}[Quantum Decoding Problem $\QDP(\C,f)$] ~\\ \textbf{Given:} $(\C,\ket{\psi_\cv},f)$ where $\C$ is a $q$-ary linear code of dimension $k$ and length $n$ for which we have a description, $\cv \Unif \C$, and $\ket{\psi_\cv} = \sum_{\ev \in \F_q^n} f(\ev) \ket{\cv + \ev}$ for a function $f : \F_q^n \rightarrow \mathbb{C}$ with $\norm{f}_2 = 1$. \\ \textbf{Goal:} Find $\cv$. \end{problem} Notice that $\QDP(\C,f)$ is easier than the $\SD(\C,|f|^2)$ problem, as one can just measure $\ket{\psi_\cv}$ in the computational basis in order to recover an instance of $\SD(\C,|f|^2)$. \begin{problem}[Constrained Codeword Problem $\CC(\C,T)$] ~\\ \textbf{Given:} The description of a $q$-ary linear code $\C$ of length $n$ and dimension $k$, and a set $T \subseteq \F_q^n$. \\ \textbf{Goal:} Find $\yv \in \C \cap T$. \end{problem} We will actually be more interested in the inhomogeneous variant of the above problem, where we want to find $\yv \in \C_{\uv} \cap T$ for a randomly chosen dual syndrome $\uv$. \begin{problem}[Inhomogeneous Constrained Codeword Problem $\ICC(\C,T)$] ~\\ \textbf{Given:} The description of a $q$-ary linear code $\C$ of length $n$ and dimension $k$ as well as a parity-check matrix $\Hm \in \F_q^{(n-k) \times n}$ of $\C$, and a random syndrome $\uv \in \F_q^{n-k}$. \\ \textbf{Goal:} Find $\yv \in \C_{\uv} \cap T$. \end{problem} For this problem, we will quantify the success probability on average over the syndrome $\uv$. \begin{definition} Let $\aa(\C,T,\uv)$ be an algorithm that solves $\ICC(\C,T)$ for a fixed $\uv$, and let $p_{\uv}$ be the probability that it finds $\yv \in \C_{\uv} \cap T$. We say that $\aa$ solves $\ICC(\C,T)$ with probability $p$ if $\E_{\uv \Unif \F_q^{n-k}}[p_{\uv}] = p$. \end{definition} Finally, we present the Optimal Polynomial Interpolation problem. \begin{problem}[Optimal Polynomial Interpolation Problem $\OPI(k,q,\{S_i\}_{i \in \F_q},\tau)$] ~\\ \textbf{Given:} Positive integers $k,q > k$ and subsets $S_i \subseteq \F_q$ for $i \in \F_q$, and a real number $\tau \in [0,1]$. A random string $\xv = (x_i)_{i \in \F_q} \in \F_q^q$. \\ \textbf{Goal:} Find $P \in \F_q[X]$ with $\deg(P) < k$ such that $|\{i \in \F_q : P(i) + x_i \in S_i\}| \ge \tau q$. \end{problem} One can remark that the OPI problem is a special instantiation of the $\ICC$ problem. \begin{claim}\label{Claim:5} The $\OPI(k,q,\{S_i\}_{i \in \F_q},\tau)$ problem and the $\ICC(\RS_k,T)$ problem are equivalent, with $$ T = \left\{(\yv_i)_{i \in \F_q^q} : \left|\{i : y_i \in S_i \}\right| \ge \tau q\right\}.$$ \end{claim}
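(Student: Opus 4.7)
The plan is to exhibit a bijection between instances of the two problems that preserves the set of valid solutions. The only subtlety is to check that the distribution on inputs (a uniform $\xv \in \F_q^q$ on the OPI side, a uniform syndrome $\uv \in \F_q^{n-k}$ on the $\ICC$ side) is the same under this correspondence.

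First, I would use the generating matrix $\Gm \in \F_q^{k \times q}$ of $\RS_k$ whose rows are $(\alpha^j)_{\alpha \in \F_q}$ for $j = 0, \dots, k-1$, so that the codeword associated to a polynomial $P = \sum_j p_j X^j$ is precisely $\pv \Gm = (P(\alpha))_{\alpha \in \F_q}$. This identifies polynomials of degree $<k$ with codewords of $\RS_k$ bijectively, and hence (via the parity-check matrix $\Hm$ of $\RS_k$) identifies cosets of $\RS_k$ with elements of $\F_q^{n-k}$.

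Next, I would handle the distribution. Since $\Hm$ has full rank $n-k$, the map $\xv \mapsto \Hm \trsp{\xv}$ is surjective with all fibers of equal size $q^k$, so pushing a uniform $\xv \in \F_q^q$ through $\Hm$ gives a uniform $\uv \in \F_q^{n-k}$. Conversely, given a uniform $\uv$, any fixed coset representative $\xv_{\uv} \in \C_{\uv}$ can be used to parametrize $\C_{\uv} = \{\xv_{\uv} + (P(\alpha))_{\alpha \in \F_q} : \deg P < k\}$, and replacing $\xv_{\uv}$ by a uniform element of $\C_{\uv}$ (i.e.\ adding a uniform codeword) does not change the coset and yields a uniform $\xv \in \F_q^q$ (conditional on $\uv$). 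This gives a coupling between the two input distributions under which the OPI string $\xv$ and the $\ICC$ syndrome $\uv$ (together with the choice of representative) determine one another.

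With these identifications, the equivalence is immediate: $\yv \in \C_{\uv}$ iff $\yv = \xv + (P(\alpha))_{\alpha \in \F_q}$ for some $P$ with $\deg P < k$, and then
\[
\yv \in T \iff \left|\{i \in \F_q : y_i \in S_i\}\right| \ge \tau q \iff \left|\{i \in \F_q : P(i) + x_i \in S_i\}\right| \ge \tau q,
\]
which is exactly the OPI winning condition. Hence solutions to $\ICC(\RS_k,T)$ on input $\uv$ (via representative $\xv$) are in explicit bijection with solutions to $\OPI(k,q,\{S_i\}_i,\tau)$ on input $\xv$, and the success probabilities match. The main (minor) obstacle is purely bookkeeping around the coset representative; there is no combinatorial difficulty.
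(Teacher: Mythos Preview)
Your proposal is correct and follows essentially the same approach as the paper: both directions are handled by the same coset/representative correspondence, with the distribution matching argued via the full rank of $\Hm$ (for $\xv \mapsto \uv$) and by choosing a uniform coset representative (for $\uv \mapsto \xv$). The paper's proof is slightly more concrete in writing out the two reductions separately, but the content is identical.
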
 \begin{proof} Fix $k,q$ and let $\Hm_k \in \F_q^{q-k}$ be the parity matrix of $\RS_k$. Assume we start from an $\OPI(k,q,\{S_i\},\tau)$ instance with random string $\xv$. Let $\uv = \Hm_k \xv$ and run an algorithm for $\ICC(\RS_k,T)$ with 
 	$$ T = \left\{(\yv_i)_{i \in \F_q^q} : \left|\{i : y_i \in S_i \}\right| \ge \tau q\right\}.$$ We obtain $\yv \in (\RS_k)_{\uv} \cap T$. In particular, $\Hm_k(\yv - \xv) = \zerov$ hence $\yv - \xv \in \RS_k$. Let $\yv' = \yv - \xv \in \RS_k$. We write $\yv' = (y'_i)_{i \in \F_q} = (P(i))_{i \in \F_q}$ for some polynomial $P \in \F_q[X]$ such that $\deg(P) < k$ which can be easily recovered from $\yv'$ using polynomial interpolation. This $P$ will be a solution to our $\OPI$ instance. Indeed, using $P(i) + x_i = y_i$, we have $$ |\{i \in \F_q : P(i) + x_i \in S_i\}| = |\{i \in \F_q : y_i \in S_i\}| \ge \tau q, \quad \text{since } \yv \in T.$$ We now prove the other direction. Start from an $\ICC(\RS_k,T)$ instance with a choice of $\{S_i\}$ and $\tau$ which defines $T$ as above; and a random $\uv \in \F_q^{n-k}$. Let $\xv \in \F_q^n$ be a random element of $\C_{\uv}$. This can be done using Gaussian Elimination after randomly permuting the lines. Run the $\OPI(k,q,\{S_i\},\tau)$ solver to obtain $P \in \F_q[X]$ with $\deg(P) < k$ such that $|\{i \in \F_q : P(i) + x_i \in S_i\}| \ge \tau q$. Let $\yv = (y_i)_{i \in \F_q}$ with $y_i = P(i)$. Notice that $\yv \in \RS_k$. This implies $\yv + \xv \in \C_{\uv} \cap T$ so we can output $\xv + \yv$ to obtain a solution of $\ICC(\RS_k,T)$. \end{proof} \subsection{The~\cite{CH25} Reduction Theorem and applications} In~\cite{CH25}, the authors present a reduction from the ISIS problem to the S-$\ket{\text{LWE}}$ problem which can be lattice-based variants of the $\ICC$ problem and the $\QDP$ problem respectively. This theorem can be translated to the setting of codes as follows: \begin{theorem}\label{Theorem:Main} Let $\C$ be a $q$-ary linear code of dimension $k$ and length $n$ and let $\Gm \in \F_q^{k \times n}$ be a generating matrix of $\C$. Let $T \subseteq \F_q^n$. Let $f : \F_q^n \rightarrow \mathbb{C}$ with $\norm{f}_2 = 1$. Assume that \begin{enumerate}\setlength\itemsep{-0.3em} \item We have a quantum algorithm $\aa_{\QDP}$ that solves $\QDP(\C,f)$ in time $\Time_{\QDP}$ and succeeds with probability $P_{Dec}$. \item The state $\sum_{\ev \in \F_q^n} f(\ev) \ket{\ev}$ is constructible in time $T_{Sampl}$. \item $\sum_{\yv \in T}|\hf(\yv)|^2 = 1 - \eta$. \end{enumerate} Then there exists a quantum algorithm that solves $\ICC(\C^\bot,T)$ with probability $$ P \ge P_{Dec}(1-\eta) - 2\sqrt{\eta P_{Dec}(1-P_{Dec})},$$ and runs in time $$\Time = O\left(\frac{1}{P_{Dec}} \left(\Time_{\QDP} + \Time_{Sampl}\right) + \poly(n,\log(q))\right).$$ \end{theorem}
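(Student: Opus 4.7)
The plan is to adapt the reduction of~\cite{CH25} to the code setting via a Regev-style construction. Given an input syndrome $\uv \in \F_q^k$, I would have the algorithm prepare the initial state
\begin{equation*}
\ket{\Phi_0} = \frac{1}{\sqrt{q^k}}\sum_{\sv \in \F_q^k} \car{-\uv}{\sv}\, \ket{\sv}_A\, \ket{\psi_{\sv \Gm}}_B\, \ket{0}_C,
\end{equation*}
where $\ket{\psi_\cv} := \sum_\ev f(\ev)\, \ket{\cv + \ev}$ is the QDP state. This uses the sampler of Assumption~2 and a QFT-like unitary on register $A$, at cost $\OO{\Time_{Sampl} + \mathrm{poly}(n, \log q)}$. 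The algorithm then applies $\aa_{\QDP}$ coherently on $BC$ as a unitary $V_{\QDP}$, subtracts the decoded-syndrome register $C$ from $A$, applies $V_{\QDP}^{-1}$ on $BC$, performs a QFT on $B$, and measures. The output is the $B$-register value $\yv$ on the event that the $A$-register measures to $0$ (signalling apparent decoder success).

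Next, I would analyse the ``ideal'' variant in which $V_{\QDP}$ is replaced by the non-unitary perfect decoder $\ket{\psi_{\sv\Gm}}_B\ket{0}_C \mapsto \ket{\psi_{\sv\Gm}}_B\ket{\sv}_C$. A direct Fourier calculation, using character orthogonality (Proposition~\ref{prop:characters}) together with the identity $\car{\yv}{\sv\Gm} = \car{\yv \trsp{\Gm}}{\sv}$, shows that the ideal joint state is a unit vector whose measurement yields $\yv \in \C^{\bot}_\uv$ with amplitude proportional to $\hat{f}(\yv)$. Summing the success contributions over cosets of $\C^\bot$ and invoking hypothesis~(3) then yields
\begin{equation*}
\E_{\uv \Unif} \Pr_{\mathrm{ideal}}[\yv \in T \cap \C^{\bot}_\uv \mid \uv] = \sum_{\yv \in T}|\hat{f}(\yv)|^2 = 1-\eta.
\end{equation*}

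To handle the imperfect decoder, I would expand $V_{\QDP}\ket{\psi_\cv}_B\ket{0}_C = \sqrt{p_\cv}\ket{\xi_\cv}_B\ket{\cv}_C + \sqrt{1-p_\cv}\ket{\mathrm{bad}_\cv}_{BC}$ with $\E_\cv p_\cv = P_{Dec}$, and show that the actual joint state decomposes as $\alpha\ket{\Phi^{\mathrm{ideal}}} + \beta\ket{\Phi^{\perp}}$ with $|\alpha|^2 = P_{Dec}$ and $\ket{\Phi^{\perp}}$ a unit vector orthogonal to the ideal. The claim then follows from the triangle inequality on the failure-projector norm:
\begin{equation*}
\sqrt{\E_\uv \Pr[\mathrm{fail}|\uv]} \;\le\; |\alpha|\sqrt{\E_\uv\Pr_{\mathrm{ideal}}[\mathrm{fail}|\uv]} + |\beta| \;\le\; \sqrt{P_{Dec}\,\eta} + \sqrt{1-P_{Dec}}.
\end{equation*}
Squaring gives $\E_\uv\Pr[\mathrm{fail}|\uv] \le P_{Dec}\eta + (1-P_{Dec}) + 2\sqrt{\eta P_{Dec}(1-P_{Dec})}$, and rearranging yields the claimed lower bound on $P$.

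The main obstacle I anticipate is establishing the fidelity $|\alpha|^2 = P_{Dec}$ at the joint level, since the naive perfect decoder is not a valid quantum operation (the states $\ket{\psi_{\sv\Gm}}$ are generally not pairwise orthogonal across $\sv$). This forces the ideal state to be represented as a sub-normalized object on each $\uv$-fibre that becomes joint-normalized via $\E_\uv \sum_{\yv \in \C^{\bot}_\uv} |\hat{f}(\yv)|^2 = q^{-k}$, and the amplitudes must be tracked carefully through the $V_{\QDP}^{-1}$ step. The lattice analog in~\cite{CH25} handles an entirely analogous setup, so the translation should be mostly bookkeeping in the code language; the time bound $\OO{(1/P_{Dec})(\Time_{\QDP}+\Time_{Sampl}) + \mathrm{poly}(n,\log q)}$ then follows by absorbing a constant-factor post-selection on the $A=0$ event.
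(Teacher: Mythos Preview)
Your algorithm coincides with the paper's (measuring $A$ after $V^{-1}$ commutes with the paper's measuring the third register before $U^\dagger$). The analyses differ. The paper first \emph{symmetrizes} the decoder via its Proposition~\ref{Proposition:Uniform}: using the shift identity $S_{\tv\Gm}\ket{\psi_\sv}=\ket{\psi_{\sv+\tv}}$, it randomizes over a uniform $\tv$ so that every success amplitude $\gamma_{\sv,\sv}$ becomes exactly $\sqrt{\Pdec}$; this lets it write the per-$\uv$ post-uncomputed state as $\sqrt{\Pdec}\,\ket{\psi_\sv}+\sqrt{1-\Pdec}\,\ket{Z_\sv}$ with $\ket{Z_\sv}\perp\ket{\psi_\sv}$, expand $|a+b|^2$ term by term, and bound the cross term by $\sqrt{\eta}$ through that orthogonality. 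Your route instead purifies over $\uv$ and applies the triangle inequality to $\|\Pi_{\mathrm{fail}}\ket{\Psi}\|$, reaching the identical bound --- and in fact it needs \emph{no} symmetrization, contrary to your worry. The post-selection on $A=0$ already succeeds with probability $\frac{1}{q^k}\sum_\sv|\gamma_{\sv,\sv}|^2=\Pdec$ for every $\uv$ (the $\ket{\sv}_C$ tag forces orthogonality across $\sv$), and in the overlap $\alpha=\braket{\Psi^{\mathrm{ideal}}}{\Psi}$ the character sum $\sum_\uv\chi_\uv(\sv'-\sv)=q^k\delta_{\sv,\sv'}$ kills every off-diagonal contribution, leaving $\alpha=\frac{1}{q^k\sqrt{\Pdec}}\sum_\sv\gamma_{\sv,\sv}\cdot\overline{\gamma_{\sv,\sv}}=\sqrt{\Pdec}$ directly. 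So the obstacle you flag, which you blame on the non-unitarity of the ideal decoder, dissolves once you actually compute this inner product; the per-fibre sub-normalization is real but irrelevant at the joint level. In short, the paper buys a clean per-$\uv$ decomposition at the cost of an explicit symmetrization preprocessing step, while your purified triangle-inequality argument gets $|\alpha|^2=\Pdec$ for free from the $\uv$-average and is arguably the tidier of the two.
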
 \begin{proof} The proof of~\cite{CH25} can almost verbatim be translated by replacing the lattice-based concepts to the code-based concepts. The only difference is that we work on $\F_q$ while~\cite{CH25} worked on $\Z_q$ but this is a very minor difference and doesn't alter the proof. For completeness, we rewrite the proof of~\cite{CH25} in the code-based setting in Appendix~\ref{Appendix:MainProof}. \end{proof} \COMMENT{ This theorem can be schematically represented as follows: \\ \begin{tikzpicture} \tikzstyle{boxA}=[draw, rectangle, align=center, text width=5.3cm, minimum height=1cm] \tikzstyle{boxB}=[draw, rectangle, align=center, text width=6.7cm, minimum height=1cm] \node[boxA] (box1) at (0,0) {There is a quantum algorithm that solves $\QDP(\C,f)$ wp. $P_{\QDP}$ and $Pr_{\yv \Unif |\hf|^2}[\yv \in T] \ge 1 - \eta$.}; \draw[->, thick] (box1.east) -- node[above] {Theorem~\ref{Theorem:Main}} ++(2.6,0) coordinate (midpoint); \node[boxB, anchor=west] (box2) at (midpoint) {There is a quantum algorithm that solves $\ICC(\C^\bot,T)$ wp. \\ $p \ge P_{Dec}(1-\eta) - 2\sqrt{\eta P_{Dec}(1-P_{Dec})}$.}; \end{tikzpicture}$ \ $ \\ } This theorem shows how to construct a quantum algorithm for $\ICC(\C^\bot,T)$ given an algorithm for $\QDP(\C,f)$ Here, the third item of our theorem allows us to study very simple product functions $f$, and we can handle distribution tails arising from these product error functions. 

\section{Choice of error function} Our goal is to construct functions for which we can apply Theorem~\ref{Theorem:Main}. For the set $T$, we pick sets $\{S_i\}_{i \in \F_q}$ where each $S_i \subseteq \F_q$ as well as a threshold $\ttau \in (0,1)$. We consider the set $$T = \{\yv = (y_i)_{i \in \Iint{1}{n}} : |\{i : y_i \in S_i\}| \ge \ttau n\}.$$ The case $\ttau = 1$ corresponds to~\cite{CT25}, whereas arbitrary $\ttau$ was studied in~\cite{JSW+24}. 

\begin{proposition}\label{Proposition:2} Let subsets $S_i \subseteq \F_q$ of a fixed same size for each $i \in \F_q$. Let $\rho = \frac{|S_i|}{q}$ which is independent of $i$. Let a threshold $\ttau \in (0,1)$ and $T_{\ttau} = \{\yv = (y_i)_{i \in \Iint{1}{n}} : |\{i : y_i \in S_i\}| \ge \ttau n\}$. Let $\tau = \ttau + n^{-1/3} = \ttau + o(1)$ and consider the functions $u_i : \F_q \rightarrow \mathbb{C}$ such that \[ \hu_i(\alpha) = \left\{ \begin{array}{cl} \sqrt{\frac{\tau}{|S_i|}} & \textrm{ if } \alpha \in S_i \\ \sqrt{\frac{1 - \tau}{q - |S_i|}} & \textrm{ if } \alpha \notin S_i \end{array}\right.\]
	Finally, let $f = \otimes_{i = 1}^n u_i$. We have
	 \begin{enumerate} 
		\item $\sum_{\yv \in T_{\ttau}} |\hf(\yv)|^2 = 1 - \eta$ with $\eta = negl(n)$.
		\item $\forall i \in \F_q, \ |u_i(0)|^2 = \left(\sqrt{\tau \rho} + \sqrt{(1-\tau)(1 - \rho)}\right)^2$, where $\tau = \ttau + o(1)$.
\end{enumerate} 
\end{proposition}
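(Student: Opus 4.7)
The plan is to exploit the product structure of $f$ for both items, reducing item 1 to a Hoeffding bound on independent Bernoulli random variables and item 2 to the inverse Fourier transform at zero.

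For item 1, I would first observe that because $f = \otimes_{i=1}^n u_i$, the Fourier transform factorizes as $\hf = \otimes_{i=1}^n \hu_i$, so $|\hf(\yv)|^2 = \prod_{i=1}^n |\hu_i(y_i)|^2$ is a product probability distribution on $\F_q^n$ (note that $\sum_{\alpha \in \F_q} |\hu_i(\alpha)|^2 = |S_i| \cdot \frac{\tau}{|S_i|} + (q - |S_i|) \cdot \frac{1-\tau}{q - |S_i|} = 1$). Under this distribution, for each fixed $i$, the event that $y_i \in S_i$ has probability exactly $\sum_{\alpha \in S_i} |\hu_i(\alpha)|^2 = \tau$, and these events are independent across coordinates. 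Therefore $N \eqdef |\{i : y_i \in S_i\}|$ is a sum of $n$ i.i.d.\ Bernoulli$(\tau)$ variables, and
\[
\sum_{\yv \in T_{\ttau}} |\hf(\yv)|^2 = \Pr[N \ge \ttau n].
\]
Since $\tau - \ttau = n^{-1/3} > 0$, Claim 1 applied to the lower tail gives
\[
\Pr[N \le \ttau n] \le 2 e^{-2n(\tau - \ttau)^2} = 2 e^{-2 n^{1/3}},
\]
which is negligible in $n$. Hence $\eta \le 2 e^{-2 n^{1/3}} = \text{negl}(n)$.

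For item 2, I would use the inverse Fourier transform, together with the fact that $\chi_\alpha(0) = 1$ for every $\alpha \in \F_q$. This gives
\[
u_i(0) = \frac{1}{\sqrt{q}} \sum_{\alpha \in \F_q} \overline{\chi_\alpha(0)}\, \hu_i(\alpha) = \frac{1}{\sqrt{q}} \left( |S_i| \sqrt{\tfrac{\tau}{|S_i|}} + (q - |S_i|) \sqrt{\tfrac{1 - \tau}{q - |S_i|}} \right) = \sqrt{\tau \rho} + \sqrt{(1 - \tau)(1 - \rho)},
\]
using $\rho = |S_i|/q$. Squaring yields the claimed formula.

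Neither step presents a real obstacle: the main content is just recognizing that $|\hf|^2$ is a Bernoulli product measure with parameter $\tau$ so that Hoeffding applies with slack $n^{-1/3}$, and that $\sum_\alpha \hu_i(\alpha)$ has a trivial closed form. The only modest care needed is in the choice of the slack $\tau - \ttau = n^{-1/3}$, which must be large enough that $e^{-2n(\tau-\ttau)^2}$ is negligible (so anything asymptotically larger than $n^{-1/2}$ works), while also being $o(1)$ so that $\tau = \ttau + o(1)$ as advertised.
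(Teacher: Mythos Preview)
Your proposal is correct and follows essentially the same approach as the paper: you use the product structure $\hf = \otimes_i \hu_i$ to recognize $|\hf|^2$ as an i.i.d.\ Bernoulli$(\tau)$ product measure and apply Hoeffding with slack $\tau - \ttau = n^{-1/3}$ for item 1, and for item 2 you compute $u_i(0) = \frac{1}{\sqrt{q}} \sum_\alpha \hu_i(\alpha)$ directly, exactly as the paper does.
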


 \begin{proof} 
 	 The relation $f = \otimes_{i = 1}^n u_i$ implies $\hf = \otimes_{i = 1}^n \hu_i$, which means we can write \[ \sum_{\yv \in \F_q^n} \hf(\yv) \ket{\yv} = \bigotimes_{i \in \F_q} \left(\sum_{\alpha \in \F_q} \hu_i(\alpha) \ket{\alpha}\right) = \bigotimes_{i \in \F_q} \left(\sum_{\alpha \in S_i} \sqrt{\frac{\tau}{|S_i|}} \ket{\alpha} + \sum_{\alpha \notin S_i} \sqrt{\frac{1 - \tau}{q - |S_i|}} \ket{\alpha}\right) .\] From Hoeffding's inequality, we have \begin{align} \sum_{\yv \in T} |\hf(\yv)|^2 \ge 1 - e^{-2n(\tau - \ttau)^2} = 1 - negl(n). \end{align} For the second point, we use $u_i(\alpha) = \frac{1}{\sqrt{q}} \sum_{\beta \in \F_q} \chi_{- \alpha}(\beta) \hu_i(\beta).$ In particular, \begin{align*} |u_i(0)|^2 & = \frac{1}{q} \left|\sum_{\beta \in \F_q} \hu_i(\beta)\right|^2 = \frac{1}{q} \left(\sum_{\beta \in S_i} \sqrt{\frac{\tau}{|S_i|}} + \sum_{\beta \notin S_i} \sqrt{\frac{1 - \tau}{q - |S|}}\right)^2 = \frac{1}{q}\left(\sqrt{\tau |S_i|} + \sqrt{(1-\tau)(q-|S_i|)}\right)^2 \\ & = \left(\sqrt{\tau \rho} + \sqrt{(1-\tau)(1 - \rho)}\right)^2 \end{align*} since $\rho = \frac{|S_i|}{q}$ for each $i \in \F_q$. \end{proof} We suppose each $|S_i|$ has the same size as it is the standard setting and it is easier to assess the performance of decoders with this kind of error functions, but the above theorem can be easily generalized to the case where the $|S_i|$ do not have the same size. 
 \COMMENT{	We show now what happens in Theorem~\ref{Theorem:Main} when we start from a classical decoder for $\SD(\C,p)$ for a product function $p = \otimes_{i = 1}^n p_i$ when $u_i$ is chosen as in Proposition~\ref{Proposition:2} and $p_i = |u_i|^2$. \begin{theorem}\label{Theorem:2} Let $\C$ be a $q$-ary linear code of dimension $k$ and length $n$. Let $t \in (0,1)$ and let $\aa_{Dec}$ be an efficient algorithm that solves $\SD(\C,p)$ for some probability function $p : \F_q^n \rightarrow \mathbb{R}_+$ such that $p = \otimes_{i = 1}^n p_i$ and each $p_i(0) = 1-t$. Let subsets $S_i \subseteq \F_q$ of the same size and $\rho = \frac{|S_i|}{q}$. Let $\tau \in (\rho,1)$ such that \[ (1-t) = \left(\sqrt{\tau \rho} + \sqrt{(1-\tau)(1 - \rho)}\right)^2.\] Finally, let $\ttau = \tau - \frac{1}{n^{1/3}} = \tau - o(1)$ and $T_{\ttau} = \{\yv = (y_i)_{i \in \Iint{1}{n}} : |\{i : y_i \in S_i\}| \ge \ttau n\}$. There exists a quantum algorithm that solves $\ICC(\C^\bot,T_{\ttau})$ with probability $P_{Dec} - negl(n)$. \end{theorem} \begin{proof} We consider the functions $u_i$ described in Proposition~\ref{Proposition:2}. In particular, we have \begin{align}\label{Equation:A1}|u_i(0)|^2 = \left(\sqrt{\tau \rho} + \sqrt{(1-\tau)(1 - \rho)}\right)^2 = 1-t.\end{align} Let also $f = \otimes_{i = 1}^n u_i$ and $p = |f|^2$. From Proposition~\ref{Proposition:2}, we also have \begin{align}\label{Equation:A2}\sum_{\yv \in T} |\widehat{u}^{\otimes n}|^2 = 1 - negl(n).\end{align} Finally, $f = u^{\otimes n}$ is a product function hence efficiently sampleable. Because of Equation~\ref{Equation:A1}, we apply $\aa_{Dec}$ coherently to obtain an efficient quantum algorithm for $\QDP(\C,f)$. The function $f$ is efficiently sampleable as it is a product function and it satisfies the third item of Theorem~\ref{Theorem:Main} with $\eta = negl(n)$ from Equation~\ref{Equation:A2}. We can therefore apply Theorem~\ref{Theorem:Main} and get out result. \end{proof} }

\section{Applications}
\subsection{The case of binary codes}
We first apply our theorem to the binary setting
\begin{proposition}
	Let $\C$ be a binary linear code of length $n$. Assume there exists an efficient algorithm $A_{Dec}$ that can decode a $t$ fraction of errors with probability $P_{Dec}$. Then there exists a quantum algorithm that solves efficiently $\ICC(\C^{\bot},T_{\ttau})$ where 
	$$ T_{\ttau} = \{\yv \in \F_2^n : |\yv| \le (1-\ttau) n\} = \{\yv = (y_i)_{i \in \Iint{1}{n}} : |\{i : y_i \in S\}| \ge \ttau n\}, \text{ for } S = \{0\}$$
	where $\ttau = \tau - n^{1/3} = \tau - o(1)$ and 
	$\tau = \frac{1}{2} + \sqrt{t(1-t)} = \left(\sqrt{\frac{t}{2}} + \sqrt{\frac{1-t}{2}}\right)^2$
	This quantum algorithm succeeds with probability $P_{Dec} - negl(n)$.
\end{proposition}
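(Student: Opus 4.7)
The plan is to apply Theorem~\ref{Theorem:Main} with the error function $f$ produced by Proposition~\ref{Proposition:2}, specialized to the binary setting. Take $q = 2$ and $S_i = \{0\}$ for every $i$, so $\rho = 1/2$. For the choice $\tau = \tfrac{1}{2} + \sqrt{t(1-t)}$, a direct calculation gives $\tau(1-\tau) = (1/2 - t)^2$, and item~2 of Proposition~\ref{Proposition:2} therefore evaluates to
\[ |u_i(0)|^2 = \left(\sqrt{\tau/2} + \sqrt{(1-\tau)/2}\right)^2 = \tfrac{1}{2} + \sqrt{\tau(1-\tau)} = 1 - t. \]
In particular the measurement distribution $|f|^2 = \bigotimes_{i=1}^n |u_i|^2$ is exactly the bitwise Bernoulli$(t)$ product, which is the noise model that $A_{Dec}$ is assumed to handle with success probability $P_{Dec}$.

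I then verify the three hypotheses of Theorem~\ref{Theorem:Main}. First, $\sum_{\ev} f(\ev)\ket{\ev} = \bigotimes_i \sum_{\alpha} u_i(\alpha)\ket{\alpha}$ is a product of single-qubit states, so it can be prepared in time $O(n)$. Second, running $A_{Dec}$ coherently on $\ket{\psi_\cv} = \sum_{\ev} f(\ev)\ket{\cv+\ev}$ gives a $\QDP(\C,f)$ solver of success probability $P_{Dec}$, since measuring the noise register would yield a Bernoulli$(t)$ sample on which $A_{Dec}$ succeeds with probability $P_{Dec}$ by hypothesis, and this bound is inherited by the quantum solver. Third, item~1 of Proposition~\ref{Proposition:2} with $\ttau = \tau - n^{-1/3}$ gives $\sum_{\yv \in T_{\ttau}} |\hf(\yv)|^2 \ge 1 - e^{-2 n^{1/3}}$, so $\eta = \mathrm{negl}(n)$. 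Plugging into Theorem~\ref{Theorem:Main} yields a quantum algorithm for $\ICC(\C^\bot, T_{\ttau})$ of success probability at least $P_{Dec}(1-\eta) - 2\sqrt{\eta P_{Dec}(1-P_{Dec})} = P_{Dec} - \mathrm{negl}(n)$.

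The only step requiring genuine care is the coherent execution of $A_{Dec}$, which implicitly assumes $A_{Dec}$ is described reversibly; this is standard since any efficient classical algorithm can be made reversible up to a polylogarithmic overhead, and if the "$t$-fraction" guarantee is stated for worst-case Hamming weight rather than Bernoulli noise, the Hoeffding claim of Section~2 absorbs the $O(\sqrt{n\log n})$ deviation of the measured weight from $tn$ into the final negligible term. Beyond that, the proof is a composition of Proposition~\ref{Proposition:2} and Theorem~\ref{Theorem:Main}; the algebraic identity $\tfrac{1}{2} + \sqrt{\tau(1-\tau)} = 1 - t$ is precisely the quadratic constraint that pins down the threshold value $\tau = \tfrac{1}{2} + \sqrt{t(1-t)}$ stated in the proposition.
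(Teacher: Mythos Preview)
Your proof is correct and follows essentially the same route as the paper's: specialize to $q=2$, $S_i=\{0\}$, choose the product error function $f=\otimes_i u_i$ so that $|u_i(0)|^2=1-t$, verify the three hypotheses of Theorem~\ref{Theorem:Main}, and conclude. The only cosmetic difference is that the paper defines $u$ directly by $u(0)=\sqrt{1-t}$, $u(1)=\sqrt{t}$ and then computes $\hu(0)=\sqrt{\tau}$, $\hu(1)=\sqrt{1-\tau}$, whereas you invoke Proposition~\ref{Proposition:2} as a black box to obtain $|u_i(0)|^2$; these are the same function viewed from opposite ends of the Fourier transform, and your added remarks on reversibility and the worst-case versus Bernoulli reading of ``$t$-fraction'' are reasonable clarifications the paper leaves implicit.
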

\begin{proof}
	In this setting, saying that $\aa_{Dec}$ solves a $t$-fraction of errors means it can solve $\SD(\C,p^{\otimes n })$ with $p(0) = 1-t$ and $p(1) = t$. Fix $u : \F_2^n \rightarrow \mathbb{C}$ with $u(0) = \sqrt{1-t}$ and $u(1) = \sqrt{t}$. Let $f = u^{\otimes n}$. $\aa_{Dec}$ solves $\SD(\C,u^{\otimes})$, hence $\QDP(\C,f)$ with probability $P_{Dec}$.  We can compute 
	\begin{align*}
		\hu(0) & = \frac{1}{\sqrt{2}}\left(u(0) + u(1)\right) =  \left(\sqrt{\frac{t}{2}} + \sqrt{\frac{1-t}{2}}\right) = \sqrt{\tau} \\
		\hu(1) & = \frac{1}{\sqrt{2}}\left(u(0) - u(1)\right) =  \left(\sqrt{\frac{t}{2}} - \sqrt{\frac{1-t}{2}}\right) = \sqrt{1 - \tau} 
	\end{align*}
	We use Proposition~\ref{Proposition:2}, with $q=2$ and each $S_i = \{0\}$ to obtain that 
	$$ \sum_{yv \in T_{\ttau}} |\hf(\yv)|^2 = 1 - \eta \text{ with } \eta = negl(n).$$
	We can therefore conclude using Theorem~\ref{Theorem:Main} that there exists an efficient quantum algorithm for $\ICC(\C^\bot,T_{\ttau})$ that succeeds with probability $P_{Dec} - negl(n)$.
\end{proof}

\begin{corollary}[Recovering results of~\cite{JSW+24}]
Consider a code $\C$ for which we can decode on average a fraction $t = \frac{6350}{50000}$ of errors. Then, we can solve  efficiently $\ICC(\C^{\bot},T)$ where 
$$ T = \{\yv \in \F_2^n : |\yv| \le (1-\ttau) n\}, \text{ for } \ttau \approx 0.833\footnote{In~\cite{JSW+24}, they claim only $\ttau \approx 0.831$ which comes from a non-asymptotic analysis, but actually arises from the same calculations.}.$$
\end{corollary}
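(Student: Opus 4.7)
The plan is to derive this corollary as an immediate consequence of the preceding proposition: no new technique is needed, only numerical substitution. The proposition gives, for any binary linear code with a decoder correcting a fraction $t$ of errors with probability $P_{Dec}$, an efficient quantum algorithm for $\ICC(\C^{\bot}, T_{\ttau})$ with success probability $P_{Dec} - \mathrm{negl}(n)$, provided one sets $\tau = \frac{1}{2} + \sqrt{t(1-t)}$ and takes $\ttau = \tau - n^{-1/3}$. So the work reduces to plugging in the specific noise fraction $t = 6350/50000$ and checking that the resulting $\tau$ matches the advertised threshold.

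Concretely, I would compute $t(1-t) = \frac{6350}{50000} \cdot \frac{43650}{50000}$, take the square root, add $\frac{1}{2}$, and observe that the value lands at approximately $0.833$. Since the proposition produces $\ttau = \tau - o(1)$, the conclusion $\ttau \approx 0.833$ holds asymptotically, and for every fixed $\ttau' < \tau$ the $\ICC(\C^{\bot}, T_{\ttau'})$ algorithm succeeds with probability $P_{Dec} - \mathrm{negl}(n)$ for all sufficiently large $n$. This also explains the footnote: the slight discrepancy with the value $0.831$ reported in \cite{JSW+24} comes purely from their finite-$n$ analysis versus our asymptotic one, and both numbers arise from the identical identity $\tau = \frac{1}{2} + \sqrt{t(1-t)}$.

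There is essentially no hard step; the only thing to be careful about is the quantifier order between the asymptotic $o(1)$ correction in $\ttau$ and the fixed constant $0.833$ claimed in the statement. I would make this precise by invoking the proposition for a concrete $\tau'$ strictly below $\tau$ and noting that, since $\tau - \ttau = n^{-1/3}$, any such $\tau'$ is achieved for $n$ large enough. After that, the corollary follows by a direct citation of the proposition applied with the chosen $t$.
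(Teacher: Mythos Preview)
Your proposal is correct and matches the paper's approach: the corollary is stated without a separate proof, as it is an immediate numerical specialization of the preceding proposition with $t = 6350/50000$, giving $\tau = \tfrac{1}{2} + \sqrt{t(1-t)} \approx 0.833$. Your remark on the quantifier order for the $o(1)$ term is a fine clarification but not something the paper spells out.
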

The $\ICC(\C^\bot,T)$ solved in this corollary corresponds exactly to the Max XORSAT instance solved in~\cite{JSW+24}, for the choice of LDPC code specified therein.

\subsection{OPI}
Now, we consider a prime $q$ and sets $S_i \subseteq \F_q$ of the same size for $i \in \F_q$. Let $\rho = \frac{|S_i|}{q}$. The idea is to work on full rank Reed--Solomon codes. We know from Claim~\ref{Claim:5} that solving $\ICC(\RS_k,T)$ for a well chosen $T$ is equivalent to solving an OPI problem. Recall also that for the case of full rank Reed--Solomon codes, the length of the code $n$ is equal to the alphabet size $q$ but we will still use the two variables for readability. We first recall the performance of Reed--Solomon decoders
\begin{proposition}\label{Proposition:Decoders} We consider the different decoders applied on $\RS_{n-k}$, {\ie} the full support Reed-Solomon codes of dimension $(n-k)$ and length $n$. 
Let $p$ be a probability function on $\F_q^n$ with $p = \otimes_{i = 1}^n p_i$ such that $\forall i, \ p_i(0) = 1-t$. 
	\begin{itemize}\setlength\itemsep{-0.2em}
		\item The \emph{Berlekamp-Welch} algorithm solves $\SD(\RS_{n-k},p)$ wp. at least $\frac{1}{\poly(n)}$ for $\frac{k}{n} \ge 2t = 2(1-p_i(0))$.
		\item The \emph{Guruswami-Sudan} algorithm solves $\SD(\RS_{n-k},p)$ wp. at least $\frac{1}{\poly(n)}$ for $\frac{k}{n} \ge 1 - (1-t)^2 = 1 - p_i^2(0)$.
	\end{itemize}
\end{proposition}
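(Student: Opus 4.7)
The plan is to combine the classical worst-case guarantees of each decoder with a simple concentration argument on the Hamming weight of the error vector $\ev \sim p = \otimes_{i=1}^n p_i$. Under this distribution, each coordinate of $\ev$ is nonzero independently with probability $t = 1-p_i(0)$, so $|\ev|$ is a sum of $n$ i.i.d.\ Bernoulli$(t)$ indicators with mean $tn$, and by the median property of the binomial, $\Pr[|\ev| \le \lceil tn\rceil ] \ge 1/2$ (the Hoeffding bound from the introductory claim gives the sharper statement if we allow any $o(1)$ slack in the threshold).

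For Berlekamp--Welch, I would first recall that the minimum distance of $\RS_{n-k}$ equals $k+1$, so the algorithm recovers the unique closest codeword whenever $|\ev| \le \lfloor k/2 \rfloor$. The hypothesis $k/n \ge 2t$ gives $\lfloor k/2 \rfloor \ge tn$ (up to an additive constant that can be absorbed by taking $\tau$ infinitesimally below $k/(2n)$), so the event $\{|\ev| \le \lfloor k/2 \rfloor\}$ has probability at least $1/2 - o(1)$. Whenever this event holds, Berlekamp--Welch returns the true codeword, so $\SD(\RS_{n-k},p)$ is solved with probability $\Omega(1) \ge 1/\poly(n)$.

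For Guruswami--Sudan, I would recall that it list-decodes $\RS_{n-k}$ up to the Johnson radius $n(1-\sqrt{R}) = n - \sqrt{n(n-k)}$, where $R = (n-k)/n$, outputting a list of size $L = \poly(n)$ guaranteed to contain the true codeword. Rearranging the hypothesis $k/n \ge 1-(1-t)^2$ gives $t \le 1 - \sqrt{1-k/n}$, i.e., $tn$ lies below the Johnson radius. The same median/concentration argument then yields $\Pr[|\ev| \le n-\sqrt{n(n-k)}] \ge 1/2 - o(1)$. On this event the true codeword is in the list, and outputting a uniformly random element of the list solves $\SD(\RS_{n-k},p)$ with probability at least $\Omega(1/L) = 1/\poly(n)$.

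The only real subtlety is the list-output step for Guruswami--Sudan: without a way to distinguish the true codeword among the list, we lose a factor of $L$ and must rely on $L = \poly(n)$ to still obtain the $1/\poly(n)$ guarantee stated in the proposition. The Berlekamp--Welch case is essentially immediate once one has the concentration bound, and the boundary cases $k/n = 2t$ or $k/n = 1-(1-t)^2$ cause no difficulty because the median of a binomial always lies within $1$ of the mean, which is harmless for a $1/\poly(n)$ statement.
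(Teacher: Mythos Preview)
Your proposal is correct and follows essentially the same approach as the paper: cite the classical decoding radii ($\lfloor k/2\rfloor$ for Berlekamp--Welch, $n-\sqrt{n(n-k)}$ for Guruswami--Sudan), then use concentration of the Bernoulli error weight around $tn$ to conclude the radius is met with at least constant probability. If anything, your write-up is more careful than the paper's own proof, which simply asserts ``constant probability'' without the median argument and does not explicitly address the list-decoding step for Guruswami--Sudan; your handling of that step (random selection from a $\poly(n)$-sized list to get $1/\poly(n)$ success) is exactly what is needed to justify the stated bound.
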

\begin{proof}
	The Berlekamp-Welch algorithm~\cite{Wel83} can solve the decoding problem on $\RS_{n-k}$ as long as the number of errors is at most $t_0 = \lfloor \frac{k}{2} \rfloor$. If we take the error distribution $p$ where each $p_i(0) \ge 1 - \frac{k}{2n}$ then the number of errors will smaller than $t_0 \approx n(1 - p_i(0))$ with constant probability. We can rewrite the inequality as $\frac{k}{n} \ge 2(1 - p_i(0))$. 
	
	The Guruswami-Sudan algorithm~\cite{GS98} can solve the decoding problem on $\RS_{n-k}$ as long as the number of errors is at most $t_0 = n - \sqrt{n(n-k)}$. If we take the error distribution $p$ where each $p_i(0) \ge \sqrt{\frac{n -k}{n}}$, then the number of errors will smaller than $t_0 = n(1 - p_i(0))$ with constant probability. We can rewrite the inequality as $\frac{k}{n} \ge 1 - p_i^2(0)$.
\end{proof}
In the case $p$ is a product function of the same distribution, so $p = v^{\otimes n}$, we have the following improvement over the Guruswami Sudan algorithm. 
\begin{proposition}\label{Proposition:KV}[\cite{MT17}] Let $v$ be a probability function on $\F_q$.
		The \emph{Koetter-Vardy} algorithm solves $\SD(\RS_{n-k},v^{\otimes n})$ wp. at least $\frac{1}{\poly(n)}$ for $\frac{k}{n} \ge 1 - \sum_{\alpha \in \F_q} v^2(\alpha)$.
\end{proposition}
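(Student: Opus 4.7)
The plan is to invoke the soft-decision list-decoding guarantee for Reed--Solomon codes established by Koetter and Vardy, with the quantitative analysis given in~\cite{MT17}, instantiated with the product noise model $v^{\otimes n}$. The Koetter--Vardy algorithm takes as input a reliability matrix $\Pi \in \mathbb{R}_+^{q \times n}$ and returns in polynomial time a list of size $\poly(n)$ containing every codeword of $\RS_{n-k}$ whose ``score'' $\sum_i \Pi_{c_i,i}$ is large enough relative to $\|\Pi\|_F$ and the dimension $n-k$. The crux is to choose $\Pi$ so that the true codeword has score concentrating above the threshold.

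Concretely, I would proceed as follows. Given a received word $\yv = \cv + \ev$ with $\ev \Unif v^{\otimes n}$, set $\Pi_{\alpha,i} = v(y_i - \alpha)$, so that the score of the transmitted codeword is $S = \sum_{i=1}^n \Pi_{c_i,i} = \sum_{i=1}^n v(e_i)$. Since $e_i \sim v$ i.i.d., one has $\E[S] = n \sum_{\alpha \in \F_q} v^2(\alpha)$, and the Frobenius norm of $\Pi$ is deterministic: $\|\Pi\|_F^2 = \sum_i \sum_\alpha v(y_i - \alpha)^2 = n \sum_\alpha v^2(\alpha)$. The Koetter--Vardy success criterion from~\cite{MT17} is (asymptotically) $S > \sqrt{(n-k-1)\,\|\Pi\|_F^2}$, which after squaring and dividing by $n$ becomes $\frac{S^2}{n} > (n-k-1)\sum_\alpha v^2(\alpha)$. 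Substituting the expected value of $S$ yields the threshold condition $n \bigl(\sum_\alpha v^2(\alpha)\bigr)^2 > (n-k-1)\sum_\alpha v^2(\alpha)$, i.e.\ $\sum_\alpha v^2(\alpha) > \frac{n-k-1}{n}$, which rearranges to the hypothesis $\frac{k}{n} \ge 1 - \sum_\alpha v^2(\alpha)$ up to a $O(1/n)$ slack.

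To turn the \emph{expected} score comparison into a high-probability statement, I would apply Hoeffding's inequality (the claim stated in the preliminaries) to the sum $S = \sum_i v(e_i)$ of independent $[0,1]$-bounded random variables. Away from the boundary $\frac{k}{n} = 1 - \sum_\alpha v^2(\alpha)$ this gives success with probability $1 - e^{-\Omega(n)}$; at the boundary one either invokes a second-moment / Chebyshev argument, or thickens the gap by $n^{-1/3}$ as in Proposition~\ref{Proposition:2}, which still yields a $\frac{1}{\poly(n)}$ success probability. Finally, the correct codeword is picked out from the polynomial-size KV list by checking consistency with $\yv$, yielding the claimed $\SD(\RS_{n-k},v^{\otimes n})$ algorithm.

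The main obstacle is not any single step but rather the bookkeeping of the $o(1)$ slack: the KV criterion as stated in~\cite{MT17} is a strict inequality involving a $\sqrt{n-k-1}$ factor, and one must be careful that the concentration window for $S$ lies strictly on the correct side of the threshold in the tight regime $\frac{k}{n} = 1 - \|v\|_2^2$. Everything else is a direct application of the cited soft-decoding result.
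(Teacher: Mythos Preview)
The paper does not give a proof of this proposition: it is stated as a result imported from~\cite{MT17}, with no argument in the body or the appendix. So there is nothing to compare your proof against on the paper's side.

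That said, your sketch is a correct and standard reconstruction of how the bound $\frac{k}{n} \ge 1 - \sum_\alpha v^2(\alpha)$ falls out of the Koetter--Vardy soft-decision criterion. The choice $\Pi_{\alpha,i} = v(y_i - \alpha)$ is the natural reliability matrix for the product channel $v^{\otimes n}$; the computations $\E[S] = n\|v\|_2^2$ and $\|\Pi\|_F^2 = n\|v\|_2^2$ are right, and plugging into $S > \sqrt{(n-k)\|\Pi\|_F^2}$ gives exactly the stated threshold. One small point: ``checking consistency with $\yv$'' at the end is not really available here (there is no verification oracle for $\SD$), but since the Koetter--Vardy list has polynomial size you can simply output a uniformly random element of it, which already gives the $\frac{1}{\poly(n)}$ success probability claimed in the proposition. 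Your caveat about the $o(1)$ slack at the exact boundary is the only genuinely delicate point, and it is handled in~\cite{MT17} rather than in this paper.
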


\subsubsection{Using the Berlekamp-Welch decoder}

\begin{theorem}[Recovering results of~\cite{JSW+24}]
	 Let positive integers $k,q$ and $n=q$. Let $S_1,\dots,S_q \subseteq \F_q$ of the same length with $\rho = \frac{|S_i|}{q}$. Let $T_{\ttau} = \{\yv = (y_i)_{i \in \Iint{1}{n}} : |\{i : y_i \in S_i\}| \ge \ttau q\}$ for any $\ttau \in (0,1)$. There exists a quantum algorithm that solves $\ICC(\RS_{k},T_{\ttau})$ with probability $\frac{1}{\poly(n)}$ for $\ttau = \tau - n^{-1/3} = \tau - o(1)$ where $\tau$ is any real such that 
	\begin{align}\label{Eq:Thm3} 1-\frac{k}{2n} \le \left(\sqrt{\tau \rho} + \sqrt{(1-\tau)(1-\rho)}\right)^2.\end{align}
	In particular, for $\frac{k}{n} = 0.1$ and $\rho = 0.5$, we recover numerically that $\tau \approx 0.7179$ saturates the above inequality.
\end{theorem}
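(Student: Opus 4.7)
The plan is to apply Theorem~\ref{Theorem:Main} with $\C = \RS_{n-k}$, so that $\C^\bot = \RS_k$ by the duality $\RS_k^\bot = \RS_{n-k}$ of full-support Reed--Solomon codes, and with the product error function $f$ from Proposition~\ref{Proposition:2}. Recall that $f = \otimes_{i=1}^n u_i$, where each $u_i : \F_q \to \mathbb{C}$ is specified through its Fourier transform $\hu_i$, uniform with amplitude $\sqrt{\tau/|S_i|}$ on $S_i$ and $\sqrt{(1-\tau)/(q-|S_i|)}$ off $S_i$. Proposition~\ref{Proposition:2} provides two of the three hypotheses of Theorem~\ref{Theorem:Main}: namely $\sum_{\yv \in T_{\ttau}} |\hf(\yv)|^2 = 1 - \eta$ with $\eta = negl(n)$ (item~3), and the closed form $|u_i(0)|^2 = \left(\sqrt{\tau\rho}+\sqrt{(1-\tau)(1-\rho)}\right)^2$, which is independent of $i$. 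Efficient preparation of the state $\sum_{\ev \in \F_q^n} f(\ev)\ket{\ev}$ (item~2) is immediate since $f$ is a tensor product of single-coordinate functions.

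It remains to produce a $\QDP(\RS_{n-k},f)$ solver. Let $p = |f|^2 = \otimes_i |u_i|^2$; the single-coordinate marginals satisfy $p_i(0) = |u_i(0)|^2$. Hypothesis~\eqref{Eq:Thm3} reads exactly $1 - \frac{k}{2n} \le p_i(0)$, equivalently $\frac{k}{n} \ge 2(1 - p_i(0))$, which is the Berlekamp--Welch regime of Proposition~\ref{Proposition:Decoders}. That algorithm therefore solves $\SD(\RS_{n-k},p)$ with probability $P_{Dec} \ge 1/\poly(n)$. Since $\QDP(\C,f)$ is no harder than $\SD(\C,|f|^2)$---one measures $\ket{\psi_{\cv}}$ in the computational basis and then runs the classical decoder on the outcome---this yields a $\QDP(\RS_{n-k},f)$ solver with the same inverse-polynomial success probability.

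Finally, I plug these three ingredients into Theorem~\ref{Theorem:Main}. The resulting quantum algorithm solves $\ICC(\RS_{n-k}^\bot, T_{\ttau}) = \ICC(\RS_k, T_{\ttau})$ with success probability $P_{Dec}(1-\eta) - 2\sqrt{\eta P_{Dec}(1-P_{Dec})}$. Because $\eta = negl(n)$ while $P_{Dec} \ge 1/\poly(n)$, the subtracted term $2\sqrt{\eta P_{Dec}(1-P_{Dec})}$ is itself $negl(n)$, so the overall success probability remains $1/\poly(n)$, as claimed. I do not foresee any real obstacle; the proof is essentially a bookkeeping exercise, the only mildly delicate point being to confirm that the Hoeffding tail from Proposition~\ref{Proposition:2} does not overwhelm the inverse-polynomial Berlekamp--Welch success, which is benign since $negl(n) \ll 1/\poly(n)$. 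The numerical check that $\tau \approx 0.7179$ saturates~\eqref{Eq:Thm3} for $k/n = 0.1$ and $\rho = 0.5$ is a direct substitution into the closed-form expression.
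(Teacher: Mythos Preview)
Your proposal is correct and follows essentially the same approach as the paper: choose $\C = \RS_{n-k}$, take the product error function $f$ from Proposition~\ref{Proposition:2}, verify the three hypotheses of Theorem~\ref{Theorem:Main} (efficient state preparation, $\eta = negl(n)$ via Hoeffding, and a $\QDP$ solver obtained by measuring and running Berlekamp--Welch in the regime $\frac{k}{n} \ge 2(1-p_i(0))$), then conclude via duality $\RS_{n-k}^\bot = \RS_k$. Your explicit remark that $negl(n) \ll 1/\poly(n)$ so the error term does not swamp $P_{Dec}$ is a detail the paper leaves implicit.
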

\begin{proof} Fix $\tau$ that satisfies Equation~\ref{Eq:Thm3}. We consider the functions
	$u_i : \F_q \rightarrow \mathbb{C}$ such that \[ \hu_i(\alpha) = \left\{ \begin{array}{cl} \sqrt{\frac{\tau}{|S_i|}} & \textrm{ if } \alpha \in S_i \\ \sqrt{\frac{1 - \tau}{q - |S_i|}} & \textrm{ if } \alpha \notin S_i \end{array}\right.\]
	Also, let $f = \otimes_{i = 1}^n u_i$ and $p = |f|^2 = \otimes_{i = 1}^n p_i$ with $p_i = |u_i|^2$. First, we have from Proposition~\ref{Proposition:2} that 
	\begin{align*}
		p_i(0) = |u_i(0)|^2 = \left(\sqrt{\tau \rho} + \sqrt{(1-\tau)(1-\rho)}\right)^2.
	\end{align*}
	We therefore obtain 
	$1 - \frac{k}{2n} \le p_i(0) \text{ which implies } \frac{k}{n} \ge 2(1-p_i(0)).$ We can use Proposition~\ref{Proposition:Decoders} to say that there exists an efficient algorithm to solve $\SD(\RS_{n-k},p)$, hence $\QDP(\RS_{n-k},f)$ with probability $\frac{1}{\poly(n)}$. This gives the first condition required in Theorem~\ref{Theorem:Main}. Since states $\sum_{\alpha \in \F_q} \hu_i(\alpha) \ket{\alpha}$ are efficiently computable, the state $\sum_{\ev \in \F_q^n} f(\ev) \ket{\ev}$ is also efficiently computable. For the third condition, we know from Proposition~\ref{Proposition:2} that 
	$$ \sum_{\yv \in T_{\ttau}} |\hf(\yv)|^2 = 1 - \eta \text{ with } \eta = negl(n).$$
	We use Theorem~\ref{Theorem:Main} to conclude that there exists an efficient quantum algorithm that solves $\ICC(\RS_{n-k}^\bot,T_{\ttau}) = \ICC(\RS_{k},T_{\ttau})$ efficiently with success probability $\frac{1}{\poly(n)} - negl(n) = \frac{1}{\poly(n)}$.
\end{proof}

\subsubsection{Using the Guruswami-Sudan decoder}
We can reproduce the above argument by replacing the Berlekamp-Welch algorithm with the Guruswami-Sudan algorithm. This gives the following theorem

\begin{theorem}[Generalizing results of ~\cite{CT25}, Guruswami-Sudan decoder]
		Let positive integers $k,q$ and $n=q$. Let $S_1,\dots,S_q \subseteq \F_q$ of the same length with $\rho = \frac{|S_i|}{q}$. Let $T_{\ttau} = \{\yv = (y_i)_{i \in \Iint{1}{n}} : |\{i : y_i \in S_i\}| \ge \ttau q\}$ for any $\ttau \in (0,1)$. There exists a quantum algorithm that solves $\ICC(\RS_{k},T_{\ttau})$ with probability $\frac{1}{\poly(n)}$ for $\ttau = \tau - n^{-1/3} = \tau - o(1)$ where $\tau$ is any real such that  
	\begin{align}\label{Eq:Thm4} 1-\frac{k}{n} \le \left(\sqrt{\tau \rho} + \sqrt{(1-\tau)(1-\rho)}\right)^4. \end{align}
	In particular, for $\frac{k}{n} = \frac{3}{4}$ and $\rho = 0.5$, the above is satisfied even with $\tau = 1$, recovering the results~\cite{CT25}.
\end{theorem}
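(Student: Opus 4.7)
The plan is to mirror the proof of the preceding Berlekamp--Welch theorem almost verbatim, swapping only the decoder. The structural ingredients stay the same: we will build the same product error function $f = \otimes_{i=1}^n u_i$ from Proposition~\ref{Proposition:2}, verify the three hypotheses of Theorem~\ref{Theorem:Main}, and invoke it. The only substantive change is that Proposition~\ref{Proposition:Decoders} decodes Reed--Solomon with Guruswami--Sudan whenever $\frac{k}{n} \ge 1 - p_i^2(0)$, rather than the Berlekamp--Welch condition $\frac{k}{n} \ge 2(1-p_i(0))$, and this is exactly what promotes the square in Equation~\ref{Eq:Thm3} to the fourth power in Equation~\ref{Eq:Thm4}.

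Concretely, we fix $\tau$ satisfying Equation~\ref{Eq:Thm4} and let $p = |f|^2 = \otimes_{i=1}^n p_i$. Proposition~\ref{Proposition:2} gives $p_i(0) = \left(\sqrt{\tau \rho} + \sqrt{(1-\tau)(1-\rho)}\right)^2$; squaring this and comparing to Equation~\ref{Eq:Thm4} yields $p_i^2(0) \ge 1 - \frac{k}{n}$, which is precisely the hypothesis of the Guruswami--Sudan branch of Proposition~\ref{Proposition:Decoders} applied to $\RS_{n-k}$. Running this decoder coherently on $\sum_{\ev} f(\ev)\ket{\cv + \ev}$, exactly as in the previous theorem, yields an efficient algorithm for $\QDP(\RS_{n-k},f)$ with success probability $\frac{1}{\poly(n)}$. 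The second hypothesis of Theorem~\ref{Theorem:Main} is immediate because $f$ factorises over $\F_q$, and the third hypothesis is the first item of Proposition~\ref{Proposition:2} with $\ttau = \tau - n^{-1/3}$. Applying Theorem~\ref{Theorem:Main} and using $\RS_{n-k}^\bot = \RS_k$ then delivers the desired efficient quantum algorithm for $\ICC(\RS_k,T_{\ttau})$ with success probability $\frac{1}{\poly(n)} - negl(n) = \frac{1}{\poly(n)}$.

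We do not anticipate a real obstacle, since every ingredient was already assembled in the preceding theorem and the fourth-power condition emerges mechanically from the fact that Guruswami--Sudan reads $p_i^2(0)$ where Berlekamp--Welch reads $p_i(0)$. The only step that deserves mild attention is the passage from a classical list-decoder with inverse-polynomial success rate on the Bernoulli-like product distribution $p$ to a coherent $\QDP$ solver, but this is handled by running the classical decoder in superposition exactly as in the Berlekamp--Welch proof, and the tail analysis of the number of errors under $p$ is already absorbed into the statement of Proposition~\ref{Proposition:Decoders}.
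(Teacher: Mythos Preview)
Your proposal is correct and follows essentially the same approach as the paper: you build the product error function from Proposition~\ref{Proposition:2}, check that Equation~\eqref{Eq:Thm4} gives $p_i^2(0) \ge 1 - \frac{k}{n}$ so that the Guruswami--Sudan branch of Proposition~\ref{Proposition:Decoders} applies to $\RS_{n-k}$, and then invoke Theorem~\ref{Theorem:Main} together with $\RS_{n-k}^\bot = \RS_k$. The paper's proof is identical in structure and explicitly states that it is the Berlekamp--Welch proof with only the decoder swapped.
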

The proof is the same as the previous one but with a different decoder. We reproduce it here for completeness and clarity.
\begin{proof}
	Fix $\tau$ that satisfies Equation~\ref{Eq:Thm3}. We consider the functions
	$u_i : \F_q \rightarrow \mathbb{C}$ such that \[ \hu_i(\alpha) = \left\{ \begin{array}{cl} \sqrt{\frac{\tau}{|S_i|}} & \textrm{ if } \alpha \in S_i \\ \sqrt{\frac{1 - \tau}{q - |S_i|}} & \textrm{ if } \alpha \notin S_i \end{array}\right.\]
	Also, let $f = \otimes_{i = 1}^n u_i$ and $p = |f|^2 = \otimes_{i = 1}^n p_i$ with $p_i = |u_i|^2$. First, we have from Proposition~\ref{Proposition:2} that 
	\begin{align*}
		p_i(0) = |u_i(0)|^2 = \left(\sqrt{\tau \rho} + \sqrt{(1-\tau)(1-\rho)}\right)^2.
	\end{align*}
	We therefore obtain $1- \frac{k}{n} \le p_i^2(0)$ which implies $\frac{k}{n} \ge 1 - p_i^2(0)$. We can use the second bullet of Proposition~\ref{Proposition:Decoders} to say that there exists an efficient algorithm to solve $\SD(\RS_{n-k},p)$, hence $\QDP(\RS_{n-k},f)$ with probability $\frac{1}{\poly(n)}$. This gives the first condition required in Theorem~\ref{Theorem:Main}. Since states $\sum_{\alpha \in \F_q} \hu_i(\alpha) \ket{\alpha}$ are efficiently computable, the state $\sum_{\ev \in \F_q^n} f(\ev) \ket{\ev}$ is also efficiently computable. For the third condition, we know from Proposition~\ref{Proposition:2} that 
	$$ \sum_{\yv \in T_{\ttau}} |\hf(\yv)|^2 = 1 - \eta \text{ with } \eta = negl(n).$$
	We use Theorem~\ref{Theorem:Main} to conclude that there exists an efficient quantum algorithm that solves $ \ICC(\RS_{k},T_{\ttau})$ efficiently with success probability $\frac{1}{\poly(n)}$.
\end{proof}
	
\subsubsection{Using the Koetter-Vardy decoder in the case of the infinity norm}
 A natural case is when each $S_i = \Iint{-z}{z}$ (recall that we write $\F_q = \Iint{-\lfloor\frac{q-1}{2}\rfloor}{\lceil\frac{q-1}{2}\rceil}$). In this case, the Koetter-Vardy decoder gives improvements over the Guruswami-Sudan decoder. In our case, we obtain the following
	
\begin{theorem}[Generalizing results of ~\cite{CT25}, Koetter-Vardy decoder]\label{Theorem:5}
	Let $q$ be a prime integer, $n=q$ and $k\le n$. Let $S = \Iint{-z}{z} \subseteq \F_q$ and $T_{\ttau} = \{\yv = (y_i)_{i \in \Iint{1}{n}} : |\{i : y_i \in S\}| \ge \ttau n\}$ for any $\ttau \in (0,1)$. There exists a quantum algorithm that solves $\ICC(\RS_{k},T_{\ttau})$ with probability $\frac{1}{\poly(n)}$ for $\ttau = \tau - o(1)$ where $\tau$ is any real such that 
	\begin{align}\label{Eq:Thm5} (1-\frac{k}{n}) \le U(\tau,\rho), \end{align} for some quantity $U(\tau,\rho)$ specified below. Let $B = \sqrt{\frac{1 - \tau}{q - (2z+1)}}$, $A = \sqrt{\frac{\tau}{2z+1}} - B$ and $\Gamma = 2AB(2z+1) + qB^2.$ \\
	For $\rho \le 0.5$, we have 
	$$ U(\tau,\rho) = A^4   \frac{2 \rho^3 q^2}{3} + 2A^2\Gamma \rho^2q + \Gamma^2.$$
	For $\rho \ge 0.5$, we have 
	$$U(\tau,\rho) = A^4 \left(q^2 \rho^2 \left(\frac{10\rho}{3} - 4 + \frac{2}{\rho} - \frac{1}{3\rho^2}\right)\right) + \frac{2A^2\Gamma}{q}\left(2z+1 +  {l(4z+1-l)} + (2z+1-l)(q-2l-1)\right) + \Gamma^2.$$
	In particular, for $\frac{k}{n} = \frac{2}{3}$ and $\rho = 0.5$, one can show numerically that the above is satisfied even with $\tau = 1$, recovering the results~\cite{CT25}.
\end{theorem}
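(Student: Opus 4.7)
The plan is to mirror the structure of the two previous theorems, substituting the Koetter--Vardy decoder for Berlekamp--Welch/Guruswami--Sudan. Since all sets $S_i = S$ coincide, Proposition~\ref{Proposition:2} produces a product function $f = u^{\otimes n}$ with identical factors, so $p = |f|^2 = v^{\otimes n}$ with $v = |u|^2$. By Proposition~\ref{Proposition:KV}, the Koetter--Vardy algorithm solves $\SD(\RS_{n-k}, v^{\otimes n})$ (hence $\QDP(\RS_{n-k}, f)$) with inverse-polynomial success probability as soon as $1 - \frac{k}{n} \le \sum_{\alpha \in \F_q} v^2(\alpha)$. The third condition of Theorem~\ref{Theorem:Main} is already handled by Proposition~\ref{Proposition:2}, and $f$ is a product state hence efficiently preparable. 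The entire content of the theorem therefore reduces to showing that $U(\tau,\rho)$ is a lower bound for $\sum_{\alpha \in \F_q} |u(\alpha)|^4$; once this is done, Theorem~\ref{Theorem:Main} delivers the claimed algorithm for $\ICC(\RS_k, T_{\ttau})$ exactly as in the previous two theorems.

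For the main technical step, I first invert the Fourier transform. Writing $\hu(\beta) = A \cdot \mathbbm{1}_{\beta \in S} + B$ with $A = \sqrt{\frac{\tau}{2z+1}} - B$ and $B = \sqrt{\frac{1-\tau}{q-(2z+1)}}$, the relation $u(\alpha) = \frac{1}{\sqrt{q}} \sum_\beta \chi_{-\alpha}(\beta)\hu(\beta)$ combined with $\sum_{\beta \in \F_q} \chi_{-\alpha}(\beta) = q \delta_{\alpha,0}$ yields $u(0) = \frac{1}{\sqrt{q}}(A(2z+1) + Bq)$ and, for $\alpha \ne 0$, $u(\alpha) = \frac{A}{\sqrt{q}} \sum_{\beta \in S} \chi_{-\alpha}(\beta)$. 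Squaring and using $\rho q = 2z+1$ gives $|u(0)|^2 = A^2 \rho^2 q + \Gamma$, hence $|u(0)|^4 = A^4 \rho^4 q^2 + 2 A^2 \Gamma \rho^2 q + \Gamma^2$. For the remaining contribution, expanding the fourth power and applying $\sum_{\alpha \ne 0} \chi_{-\alpha}(x) = q \delta_{x,0} - 1$ converts it into a counting problem:
\[ \sum_{\alpha \ne 0} |u(\alpha)|^4 = \frac{A^4}{q^2}\bigl(qN - (2z+1)^4\bigr), \qquad N = \bigl|\bigl\{(\beta_1,\beta_2,\beta_3,\beta_4) \in S^4 : \beta_1 + \beta_3 \equiv \beta_2 + \beta_4 \pmod{q}\bigr\}\bigr|. \]
Adding the two pieces collapses the $A^4 \rho^4 q^2$ terms and gives $\sum_\alpha |u(\alpha)|^4 = \frac{A^4 N}{q} + 2 A^2 \Gamma \rho^2 q + \Gamma^2$, so the whole problem reduces to estimating $N$.

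The main obstacle is the combinatorial evaluation of $N$, which splits according to whether modular wraparound in $\F_q$ plays a role. For $\rho \le \tfrac12$ we have $4z < q$ and no wraparound: setting $s = \beta_1 + \beta_3$ and using $r(s) = |\{(\beta_1,\beta_3) \in S^2 : \beta_1 + \beta_3 = s\}| = 2z+1 - |s|$, I obtain $N = \sum_{s=-2z}^{2z}(2z+1-|s|)^2$, whose leading asymptotics $\tfrac{2}{3}(2z+1)^3 = \tfrac{2}{3}\rho^3 q^3$ give the first declared expression for $U(\tau,\rho)$. For $\rho \ge \tfrac12$ additional quadruples appear where $\beta_1 + \beta_3$ and $\beta_2 + \beta_4$ agree modulo $q$ without agreeing in $\mathbb{Z}$; a case analysis parameterized by the overflow quantity $l$ (essentially $(2z+1) - \lfloor q/2 \rfloor$, the amount by which $S + S$ exceeds $\F_q$) enumerates these extra contributions and produces the second displayed formula. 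In both regimes the error between the exact value of $\sum_\alpha |u(\alpha)|^4$ and $U(\tau,\rho)$ is of lower order and absorbed in the slack provided by the inequality $1 - \tfrac{k}{n} \le U(\tau,\rho)$. Plugging into Theorem~\ref{Theorem:Main} then concludes the proof, and the sanity check $\tau = 1$, $\rho = \tfrac12$, $\tfrac{k}{n} = \tfrac23$ (both formulas collapse to $U = \tfrac{A^4 q^2}{12}$, matching $\tfrac{1}{3}$) recovers~\cite{CT25}.
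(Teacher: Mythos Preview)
Your high-level structure is identical to the paper's: invoke Proposition~\ref{Proposition:2} to build $f=u^{\otimes n}$, feed $v=|u|^2$ into the Koetter--Vardy bound of Proposition~\ref{Proposition:KV}, and conclude via Theorem~\ref{Theorem:Main}. The only substantive work is the technical lemma $\sum_\alpha |u(\alpha)|^4 \ge U(\tau,\rho)$, and here your route differs slightly from the paper's. The paper uses the Parseval/convolution identity $\sum_\alpha |u(\alpha)|^4 = \tfrac{1}{q}\sum_\alpha (\hu\star\hu)^2(\alpha)$, expands $\hu = A\,\one_{\Iint{-z}{z}} + B\,\one_{\F_q}$ so that $(\hu\star\hu)(\alpha)=A^2(\one_S\star\one_S)(\alpha)+\Gamma$, and then squares and sums. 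You instead invert the Fourier transform directly, separate $\alpha=0$ from $\alpha\neq 0$, and reduce via character orthogonality to the quadruple count $N$. Both routes land on exactly the same combinatorial quantity, since $N=\sum_\alpha(\one_S\star\one_S)^2(\alpha)$, and your identity $\sum_\alpha|u(\alpha)|^4 = \tfrac{A^4N}{q}+2A^2\Gamma\rho^2 q+\Gamma^2$ matches the paper's expansion term for term (the middle term in the paper's $\rho\ge\tfrac12$ formula, though written via $l$, also equals $2A^2\Gamma\rho^2 q$). Your approach is perhaps more elementary; the paper's is slightly cleaner because the convolution structure makes the $\Gamma$ term appear uniformly without splitting off $\alpha=0$.

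One correction: your parenthetical description of $l$ as ``essentially $(2z+1)-\lfloor q/2\rfloor$'' is wrong. In the paper (and in the formula for $U(\tau,\rho)$) the parameter is $l=q-(2z+1)=|S^c|$, the size of the complement of $S$; this is what governs the shape of $\one_S\star\one_S$ when wraparound occurs. Your $\rho\ge\tfrac12$ paragraph is otherwise left as a sketch, which is fine since the paper itself defers the bound on $\sum_\alpha(\one_S\star\one_S)^2(\alpha)$ in that regime to~\cite{CT25}.
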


The proof idea will be the same but the calculations are more complicated.

\begin{proof}
	Fix $\tau$ that satisfies Equation~\ref{Eq:Thm3}. We consider the function
	$u : \F_q \rightarrow \mathbb{C}$ such that \[ \hu(\alpha) = \left\{ \begin{array}{cl} \sqrt{\frac{\tau}{|S|}} & \textrm{ if } \alpha \in S \\ \sqrt{\frac{1 - \tau}{q - |S|}} & \textrm{ if } \alpha \notin S \end{array}\right.\]
	Also, let $f = u^{\otimes n}$ and $p = |f|^2 = (|u|^2)^{\otimes n}$. The crucial calculation is captured by the following lemma
	\begin{lemma}\label{Lemma:KV}
		$\sum_{\alpha} |u|^4(\alpha) \ge U(\tau,\rho)$
	\end{lemma}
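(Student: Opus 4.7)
The plan is to compute $\sum_\alpha |u(\alpha)|^4$ via a Fourier-inversion representation of $u$, then reduce it to the additive energy of the interval $S = \Iint{-z}{z}$ in $\F_q$, which I can evaluate combinatorially, splitting into the two regimes $\rho \le 1/2$ (no wraparound) and $\rho \ge 1/2$ (wraparound).

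First I would rewrite $\hu(\beta)$ in the additive form $\hu(\beta) = A\, \mathbbm{1}_S(\beta) + B$, which holds because $\sqrt{\tau/|S|} = A + B$ by the definition of $A$. Applying Fourier inversion $u(\alpha) = \frac{1}{\sqrt{q}}\sum_\beta \chi_{-\alpha}(\beta)\hu(\beta)$ and using the orthogonality of characters on the constant term $B$, I obtain
\[
u(\alpha) \;=\; \frac{A}{\sqrt q}\, g(\alpha) \;+\; \sqrt q\, B\, \delta_{\alpha,0}, \quad \text{where } g(\alpha) := \sum_{\beta \in S} \chi_{-\alpha}(\beta).
\]
Since the Kronecker-delta term only contributes at $\alpha=0$, I can split
\[
\sum_{\alpha \in \F_q} |u(\alpha)|^4 \;=\; |u(0)|^4 \;+\; \frac{A^4}{q^2} \sum_{\alpha \neq 0} |g(\alpha)|^4.
\]
Using $g(0) = |S| = 2z+1$, the term $|u(0)|^4$ expands as $\tfrac{A^4(2z+1)^4}{q^2} + \tfrac{2A^2(2z+1)^2}{q}\Gamma + \Gamma^2$, matching the algebraic shape of $U(\tau,\rho)$.

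Next I would evaluate $\sum_{\alpha} |g(\alpha)|^4$. Expanding $|g(\alpha)|^2 = \sum_{\beta,\beta'\in S}\chi_{-\alpha}(\beta-\beta')$ and applying character orthogonality once more gives
\[
\sum_{\alpha \in \F_q} |g(\alpha)|^4 \;=\; q \cdot E_2(S), \qquad E_2(S) \;=\; \bigl|\{(\beta_1,\beta_1',\beta_2,\beta_2') \in S^4 : \beta_1 - \beta_1' = \beta_2 - \beta_2'\}\bigr|.
\]
So the problem reduces to computing the additive energy of the arithmetic interval $S$ inside $\F_q$. Writing $r(d) := |S \cap (S+d)|$, we have $E_2(S) = \sum_{d \in \F_q} r(d)^2$. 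Combining this with the $|u(0)|^4$ expansion, the $\frac{A^4(2z+1)^4}{q^2}$ contribution exactly cancels the subtraction of $|g(0)|^4 = (2z+1)^4$ inside the $\alpha \neq 0$ sum, leaving
\[
\sum_\alpha |u(\alpha)|^4 \;=\; \frac{A^4}{q}\, E_2(S) \;+\; \frac{2A^2 (2z+1)^2}{q}\Gamma \;+\; \Gamma^2.
\]

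Finally I would compute $E_2(S)$ explicitly. For $\rho \le 1/2$ (i.e.\ $|S| \le q/2$), the interval does not wrap, so $r(d) = (2z+1) - |d|$ for $|d| \le 2z$ and $0$ otherwise, giving $E_2(S) = (2z+1)^2 + \frac{(2z)(2z+1)(4z+1)}{3}$; substituting $|S| = \rho q$ and keeping leading-order terms in $q$ recovers the stated $U(\tau,\rho) = A^4 \frac{2\rho^3 q^2}{3} + 2A^2\Gamma\rho^2 q + \Gamma^2$, with an extra positive remainder (the lower-order terms of $E_2(S)$) that justifies the inequality "$\ge$". For $\rho \ge 1/2$, the interval $S$ wraps: with $l = q - (2z+1)$, I partition differences into three ranges (direct overlap, "doubled" overlap of size $2(2z+1) - q$, and wrap-around overlap), count $r(d)$ in each range by geometric inspection on $\Z/q\Z$, and evaluate $\sum_d r(d)^2$ using the standard formula $\sum_{j=1}^N j^2 = \frac{N(N+1)(2N+1)}{6}$. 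The result, after collecting terms, matches the more elaborate second expression given for $U(\tau,\rho)$. The main obstacle will be the bookkeeping for the wraparound case: correctly identifying the three $d$-ranges and verifying the resulting polynomial in $q, \rho, l$ agrees with the stated $U(\tau,\rho)$; everything else is Fourier-analytic manipulation of the structured function $\hu$.
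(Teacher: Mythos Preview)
Your proposal is correct and follows essentially the same route as the paper: both decompose $\hu = A\,\one_S + B$ and reduce $\sum_\alpha |u(\alpha)|^4$ to the autocorrelation $(\one_S \star \one_S)(\alpha)=r(\alpha)$ (your additive energy $E_2(S)$ is exactly the paper's $\sum_\alpha (\one_S\star\one_S)^2(\alpha)$), then evaluate this combinatorially in the two regimes $\rho\le 1/2$ and $\rho>1/2$. The only cosmetic difference is that the paper reaches the same formula via the Parseval identity $\sum_\alpha |u(\alpha)|^4=\frac{1}{q}\sum_\alpha(\hu\star\hu)^2(\alpha)$ instead of computing $u$ explicitly, which spares it your $\delta_{\alpha,0}$--cancellation step.
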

	\begin{proof}
		We defer the proof of this lemma to Appendix~\ref{Appendix:KV}.
	\end{proof}
	From this lemma, we obtain 
	$$ 1 - \frac{k}{n} \le U(\tau,\rho) \le \sum_{\alpha} |u|^4(\alpha).$$
	 We can use Proposition~\ref{Proposition:KV} with $v = |u|^2$ to say that there exists an efficient algorithm to solve $\SD(\RS_{n-k},p)$, hence $\QDP(\RS_{n-k},f)$ with probability $\frac{1}{\poly(n)}$. This gives the first condition required in Theorem~\ref{Theorem:Main}. Since states $\sum_{\alpha \in \F_q} \hu_i(\alpha) \ket{\alpha}$ are efficiently computable, the state $\sum_{\ev \in \F_q^n} f(\ev) \ket{\ev}$ is also efficiently computable. For the third condition, we know from Proposition~\ref{Proposition:2} that 
	$$ \sum_{\yv \in T_{\ttau}} |\hf(\yv)|^2 = 1 - \eta \text{ with } \eta = negl(n).$$
	We use Theorem~\ref{Theorem:Main} to conclude that there exists an efficient quantum algorithm that solves $ \ICC(\RS_{k},T_{\ttau})$ efficiently with success probability $\frac{1}{\poly(n)}$.
	\end{proof} 

\subsection{Graphical representation of our results}
As an example, we provide a few numerical results for the case $\rho = \frac{1}{2}$, which corresponds to the case studied in previous work. First we give in Figure~\ref{Figure:1} a plot presenting existing results as well as our new theorems. 
\begin{figure}[!ht]
	\centering
	\includegraphics[width = 0.8\textwidth]{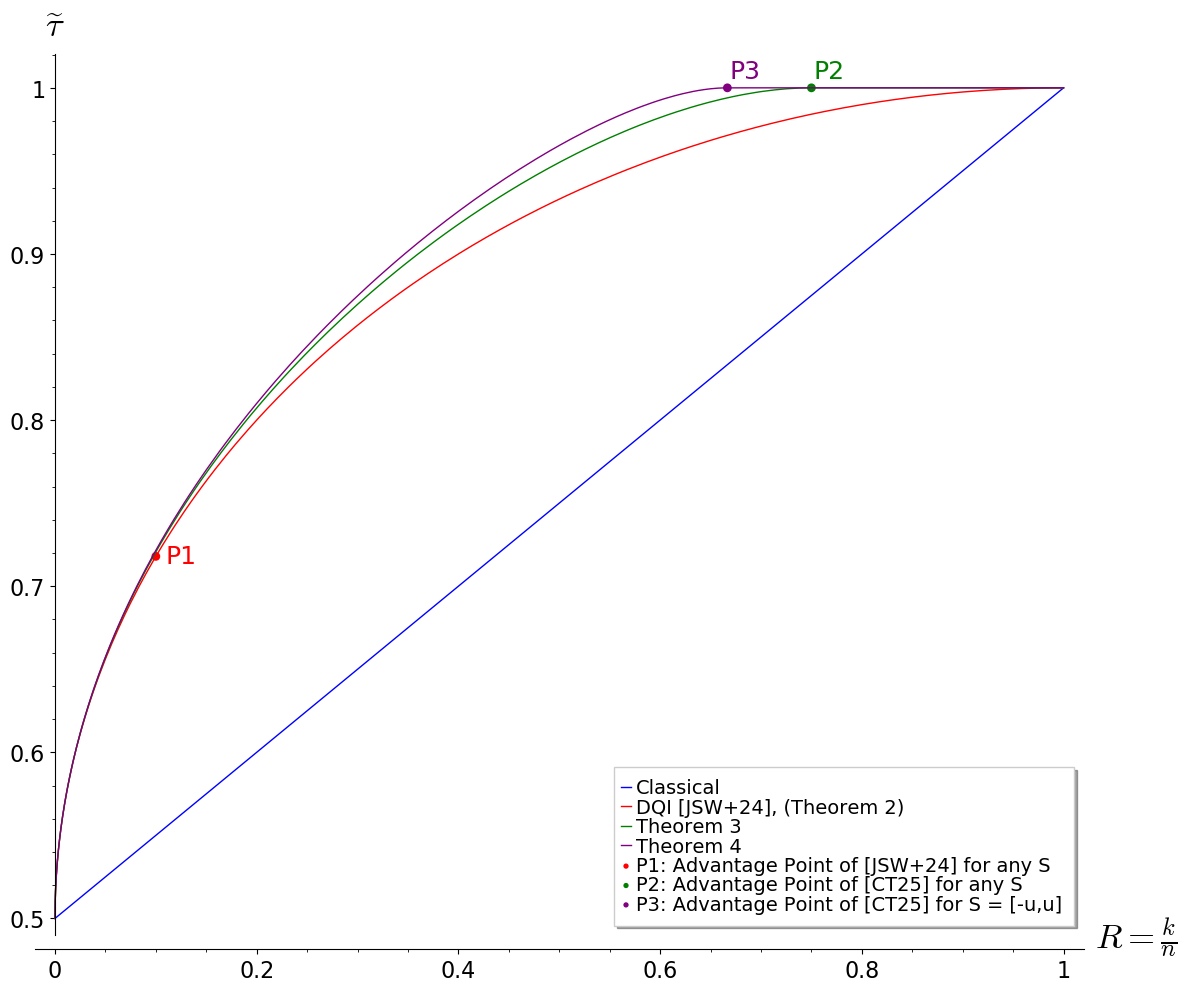}
	\caption{Ratio of satisfied constraints $\ttau$ as a function of the code rate $R = \frac{k}{n}$, for  $\rho = \frac{|S|}{q} = \frac{1}{2}$.}
	\label{Figure:1}
\end{figure}

The authors of~\cite{JSW+24} considered for OPI an instance where the best efficient classical algorithm solve 55\% of the constraints. We also show, for different decoders, what are the parameters that achieve the highest advantage for this classical threshold.

	\begin{table}[h]
	\centering
	\begin{tabular}{|c|c||c|c|c|c||l|}
		\hline
		$R = \frac{k}{n}$ & $\rho = \frac{|S|}{q}$ & $\ttau$(Classical) & $\ttau$(DQI) & $\ttau$(Th. 3)& $\ttau$(Th. 4)& Comments \\
		\hhline{|=|=||=|=|=|=||=|}
		0.1 & 0.5 & 0.55 & 0.718 & 0.721 & 0.722 & P1 point \\
		0.75 & 0.5 & 0.875 & 0.984 & 1 & 1 & P2 point \\
		$\frac{2}{3}$ & 0.5 & 0.833 & 0.971 & 0.994 & 1 & P3 point \\
		\hline 
		0.234 & 0.413 & 0.55 & 0.749 & 0.760 & 0.763 & Opt. $\ttau$(DQI) for $\ttau$(Classical) = 0.55 \\
		0.259 & 0.393 & 0.55 & 0.748 & 0.761 & 0.765 & Opt. $\ttau$(Th. 3) for $\ttau$(Classical) = 0.55 \\
		0.267 & 0.386 & 0.55 & 0.748 & 0.761 & 0.765 & Opt. $\ttau$(Th. 4) for $\ttau$(Classical) = 0.55 \\
		\hline
	\end{tabular}
	\caption{Different achievable thresholds $\ttau$ given parameters $R,\rho$}
	\label{tab:your_label}
\end{table}


\COMMENT{\newpage
\section{Conclusion}
In this work, we provided a simple and general framework that generalizes both the results of~\cite{JSW+24} where we recover their results and extend them to the case of decoders with errors, and the results of~\cite{CT25}, that we extend to the case where we want to satisfy only a fraction $\ttau$ of the constraints. 

In our algorithm analysis, our main insight was not to try to compare a real setting with an ideal setting without errors but rather just focus on the final problem we want to solve and on generic properties of the function $f$ we use. This analysis also ensures that if we have a decoder for $\SD(\Hm,|f|^2)$, then a solution to the corresponding $\ICC(\C^\bot,T)$ will necessarily exist without needing any extra property or conjecture on the code. 

In the general $q$-ary setting, the authors of~\cite{JSW+24} use fairly elaborate techniques to exhibit the error functions for their algorithm. In this work, we construct a natural function that essentially matches their optimal bound. This function is actually a product function which makes is very simple to analyze. 
}

\textbf{Acknowledgments.} The author thanks Noah Shutty for valuable feedback. We acknowledge funding from the French PEPR integrated projects EPIQ (ANR-22-PETQ-007), PQ-TLS (ANR-22-PETQ-008) and HQI (ANR-22-PNCQ-0002) all part of plan France 2030.

\newpage
\bibliography{paper}

\newcommand{\etalchar}[1]{$^{#1}$}
\begin{thebibliography}{BKSW18}

\bibitem[BCT25]{BCT25}
Agathe Blanvillain, André Chailloux, and Jean-Pierre Tillich.
\newblock The quantum decoding problem: Tight achievability bounds and
  application to {Regev's} reduction, 2025.
\newblock \url{https://arxiv.org/abs/2509.24796}.

\bibitem[BKSW18]{BKSW18}
Zvika Brakerski, Elena Kirshanova, Damien Stehlé, and Weiqiang Wen.
\newblock Learning with errors and extrapolated dihedral cosets.
\newblock In {\em Public-Key Cryptography – PKC 2018}, volume 10770 of {\em
  Lecture Notes in Computer Science}, pages 702--727. Springer, 2018.

\bibitem[CH25]{CH25}
André Chailloux and Paul Hermouet.
\newblock On the quantum equivalence between {$S|LWE\rangle$} and {$ISIS$},
  2025.
\newblock \url{https://arxiv.org/abs/2510.06097}.

\bibitem[CHL{\etalchar{+}}25]{CHL+25}
Yilei Chen, Zihan Hu, Qipeng Liu, Han Luo, and Yaxin Tu.
\newblock Lwe with quantum amplitudes: Algorithm, hardness, and oblivious
  sampling.
\newblock In {\em CRYPTO 2025}. Springer, Lecture Notes in Computer Science,
  2025.

\bibitem[CLZ22]{CLZ22}
Yilei Chen, Qipeng Liu, and Mark Zhandry.
\newblock Quantum algorithms for variants of average-case lattice problems via
  filtering.
\newblock In Orr Dunkelman and Stefan Dziembowski, editors, {\em Advances in
  Cryptology - {EUROCRYPT} 2022 - 41st Annual International Conference on the
  Theory and Applications of Cryptographic Techniques, Trondheim, Norway, May
  30 - June 3, 2022, Proceedings, Part {III}}, volume 13277 of {\em LNCS},
  pages 372--401. Springer, 2022.

\bibitem[CT24]{CT24}
Andr{\'{e}} Chailloux and Jean{-}Pierre Tillich.
\newblock The quantum decoding problem.
\newblock In {\em Theory of Quantum Computation, Communication and
  Cryptography, {TQC} 2024, September 9-13, 2024, Okinawa, Japan}, volume 310
  of {\em LIPIcs}, pages 6:1--6:14, 2024.

\bibitem[CT25]{CT25}
Andr\'{e} Chailloux and Jean-Pierre Tillich.
\newblock Quantum advantage from soft decoders.
\newblock In {\em Proceedings of the 57th Annual ACM Symposium on Theory of
  Computing}, STOC '25, page 738–749, New York, NY, USA, 2025. Association
  for Computing Machinery.

\bibitem[DFS24]{DFS24}
Thomas Debris{-}Alazard, Pouria Fallahpour, and Damien Stehl{\'{e}}.
\newblock Quantum oblivious {LWE} sampling and insecurity of standard model
  lattice-based snarks.
\newblock In Bojan Mohar, Igor Shinkar, and Ryan O'Donnell, editors, {\em
  Proceedings of the 56th Annual {ACM} Symposium on Theory of Computing, {STOC}
  2024, Vancouver, BC, Canada, June 24-28, 2024}, pages 423--434. {ACM}, 2024.

\bibitem[DRT24]{DRT24}
Thomas Debris{-}Alazard, Maxime Remaud, and Jean{-}Pierre Tillich.
\newblock Quantum reduction of finding short code vectors to the decoding
  problem.
\newblock {\em {IEEE} Trans. Inf. Theory}, 70(7):5323--5342, 2024.

\bibitem[GJ25]{GJ25}
Andi Gu and Stephen~P. Jordan.
\newblock Algebraic geometry codes and decoded quantum interferometry, 2025.
\newblock \url{https://arxiv.org/abs/2510.06603}.

\bibitem[GS98]{GS98}
Venkatesan Guruswami and Madhu Sudan.
\newblock Improved decoding of {R}eed--{S}olomon and algebraic-geometric codes.
\newblock In {\em Proceedings 39th Annual Symposium on Foundations of Computer
  Science (Cat. No. 98CB36280)}, pages 28--37. IEEE, 1998.

\bibitem[Hil25]{Hill25}
Daniel~Cohen Hillel.
\newblock Optimization of quadratic constraints by decoded quantum
  interferometry, 2025.
\newblock \url{https://arxiv.org/abs/2510.08061}.

\bibitem[JSW{\etalchar{+}}24]{JSW+24}
Stephen~P. Jordan, Noah Shutty, Mary Wootters, Adam Zalcman, Alexander
  Schmidhuber, Robbie King, Sergei~V. Isakov, and Ryan Babbush.
\newblock Optimization by decoded quantum interferometry, 2024.

\bibitem[KOW25]{KOW25}
Robin Kothari, Ryan {O'Donnell}, and Kewen Wu.
\newblock No exponential quantum speedup for $\mathrm{SIS}^\infty$ anymore,
  2025.
\newblock \url{https://arxiv.org/abs/2510.07515}.

\bibitem[MT17]{MT17}
Irene {M\'{a}rquez-Corbella} and Jean-Pierre Tillich.
\newblock Attaining capacity with iterated (u|u + v) codes based on ag codes
  and koetter-vardy soft decoding.
\newblock In {\em 2017 IEEE International Symposium on Information Theory
  (ISIT)}, page 6–10. IEEE Press, 2017.

\bibitem[Reg05]{Reg05}
Oded Regev.
\newblock On lattices, learning with errors, random linear codes, and
  cryptography.
\newblock In {\em Proceedings of the Thirty-Seventh Annual ACM Symposium on
  Theory of Computing}, STOC '05, page 84–93, New York, NY, USA, 2005.
  Association for Computing Machinery.

\bibitem[SLS{\etalchar{+}}25]{SLS+25}
Alexander Schmidhuber, Jonathan~Z. Lu, Noah Shutty, Stephen Jordan, Alexander
  Poremba, and Yihui Quek.
\newblock Hamiltonian decoded quantum interferometry, 2025.
\newblock \url{https://arxiv.org/abs/2510.07913}.

\bibitem[Wel83]{Wel83}
Lloyd~R. Welch.
\newblock The theory of error-correcting codes.
\newblock In {\em Coding Theory and Applications}, pages 343--360. Springer,
  1983.

\bibitem[YZ24]{YZ24}
Takashi Yamakawa and Mark Zhandry.
\newblock Verifiable quantum advantage without structure.
\newblock {\em J. {ACM}}, 71(3):20, 2024.

\end{thebibliography}
\bibliographystyle{alpha}
\newpage
\begin{appendix}

\section{Rewriting~\cite{CH25} for codes}\label{Appendix:MainProof}
In this work, we prove Theorem~\ref{Theorem:Main}. As we mentioned, this is a direct adaptation of the reduction theorem in~\cite{CH25} that we reproduce verbatim, only by changing all lattice concepts with the corresponding code concepts. So we start from a $q$-ary code $\C$ with associated generating matrix $\Gm$ and we construct an algorithm to solve $\ICC(\C^\bot,T)$ for some $T \subseteq \F_q^n$. 

\subsection{Characterization of quantum algorithms for $\QDP$}

A quantum algorithm for $\QDP(\C,f)$ can be described by a unitary $U$ (that depends on $\C$ and $f$) such that 
$$ \forall \sv \in \F_q^k, \ U \ket{\psi_\sv}\ket{0} = \sum_{\sv' \in \F_q^k} \gamma_{\sv,\sv'} \ket{\sv'}\ket{\wpsi_{\sv,\sv'}}, \text{ for some unit vectors } \ket{\wpsi_{\sv,\sv'}} \text{ and } \gamma_{\sv,\sv'} \in \mathbb{C},$$
and the result is obtained by measuring the first register. 
The success probability of this algorithm for each $\sv$ is $p_\sv =|\gamma_{\sv,\sv}|^2$ and the overall success probability is $p = \frac{1}{q^k} \sum_{\sv} |\gamma_{\sv,\sv}|^2$. We first prove that any such quantum algorithm can be symmetrized in the sense that each $\gamma_{\sv,\sv}$ is equal to $\sqrt{p}$.
\begin{proposition}\label{Proposition:Uniform}
	Let $\aa$ be an efficient quantum algorithm for $\QDP(\C,f)$ that succeeds with probability $p$. There exists an efficiently computable unitary $U$ such that 
	$$ \forall \sv \in \F_q^k, \ U \ket{\psi_\sv}\ket{0} = \sum_{\sv' \in \F_q^k} \gamma'_{\sv,\sv'} \ket{\sv'}\ket{\wpsi''_{\sv,\sv'}}, \text{ for some unit vectors } \ket{\wpsi''_{\sv,\sv'}} \text{and each } \gamma'_{\sv,\sv} = \sqrt{p}.$$
\end{proposition}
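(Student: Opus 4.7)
The plan is to symmetrize $\aa$ using the linear structure of $\C$: we add a uniformly random codeword shift to the input, run the original algorithm, then undo the shift on the output register. Concretely, starting from $\ket{\psi_\sv}\ket{0}$, I first prepare a fresh register in the uniform superposition $\frac{1}{\sqrt{q^k}}\sum_{\tv \in \F_q^k}\ket{\tv}$, then coherently add $\tv\Gm$ into the code register. Since $\sum_{\ev} f(\ev)\ket{\sv\Gm + \tv\Gm + \ev} = \ket{\psi_{\sv+\tv}}$, the global state becomes $\frac{1}{\sqrt{q^k}}\sum_{\tv}\ket{\tv}\ket{\psi_{\sv+\tv}}\ket{0}$. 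I then apply the unitary $U$ of $\aa$ to the last two registers and finally subtract $\tv$ from the guess register.

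Next I would read off the amplitudes. After $U$ and the subtraction, the state is
\[
\frac{1}{\sqrt{q^k}}\sum_{\tv}\ket{\tv}\sum_{\sv'}\gamma_{\sv+\tv,\sv'}\ket{\sv'-\tv}\ket{\wpsi_{\sv+\tv,\sv'}}.
\]
Reindexing by $\sv'' = \sv' - \tv$, the block associated with an output guess $\sv''$ carries the (unnormalized) auxiliary state $\frac{1}{\sqrt{q^k}}\sum_\tv \gamma_{\sv+\tv,\sv''+\tv}\ket{\tv}\ket{\wpsi_{\sv+\tv,\sv''+\tv}}$. For $\sv'' = \sv$, its squared norm equals $\frac{1}{q^k}\sum_{\tv}|\gamma_{\sv+\tv,\sv+\tv}|^2$, and reindexing $\sv' = \sv+\tv$ this is exactly $p$, independent of $\sv$.

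Finally, to match the exact form with $\gamma'_{\sv,\sv} = \sqrt{p}$ a fixed positive real rather than an arbitrary complex number, I pull a global phase out of the auxiliary block and absorb it into the definition of the unit vector $\ket{\wpsi''_{\sv,\sv}}$; similarly for the off-diagonal coefficients $\gamma'_{\sv,\sv'}$ I just set them so that the remaining auxiliary states are unit vectors. The resulting $U'$ is computed by $O(1)$ calls to $U$ together with two controlled additions (both efficient given $\Gm$), so efficiency is preserved.

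There is no real obstacle beyond the bookkeeping: the key algebraic fact is just that the map $\sv \mapsto \ket{\psi_\sv}$ is equivariant under the shift $\sv \mapsto \sv + \tv$ because $\C$ is linear, so averaging over $\tv$ turns the input-dependent success probability $p_\sv$ into the uniform average $p$. The mild point to be careful about is to add $\tv\Gm$ to the \emph{superposition} rather than to any particular codeword, and to undo the shift on the guess register before measurement so that the classical output is $\sv$ and not $\sv + \tv$.
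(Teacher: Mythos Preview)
Your proposal is correct and follows essentially the same approach as the paper: both symmetrize by coherently shifting the input by a uniformly superposed codeword $\tv\Gm$, applying the original decoder unitary, and subtracting $\tv$ from the guess register, then absorb the residual phase into the auxiliary unit vector to make $\gamma'_{\sv,\sv}=\sqrt{p}$ real and positive. The only differences are cosmetic (register ordering and notation).
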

\begin{proof}
	The idea is to use the symmetries inherent to the states $\ket{\psi_\sv}$. We present a first algorithm that succeeds with probability $p$ for each input state $\ket{\psi_\sv}$. Consider the shift unitaries $S_\zv : \ket{\xv} \rightarrow \ket{\xv + \zv}$ for $\xv,\zv \in \F_q^n$ which are efficiently computable. Notice that $\forall \sv,\tv \in \F_q^k$, we have $\ket{\psi_\tv} = S_{(\tv - \sv)\Gm } \ket{\psi_\sv}$. We consider the following algorithm 
	\begin{enumerate}
		\item Given input $\ket{\psi_\sv}$, construct  
		$$ \ket{\Omega_1} = \frac{1}{\sqrt{q^k}}\sum_{\tv \in \F_q^k}S_{\tv\Gm }\ket{\psi_\sv}\ket{0}\ket{\tv} = \sum_{\tv \in \F_q^k}\ket{\psi_{\sv + \tv}}\ket{0}\ket{\tv}.$$
		\item  Apply $U$ on the first two register to obtain 
		$$ \ket{\Omega_2} = \frac{1}{\sqrt{q^k}}\sum_{\tv \in \F_q^k} \sum_{\sv' \in \F_q^k} \gamma_{\sv+\tv,\sv'} \ket{\sv'} \ket{\wpsi_{\sv + \tv,\sv'}}\ket{\tv}.$$
		\item We subtract the value from the third register in the first register to obtain
		\begin{align*}\ket{\Omega_3} & = \frac{1}{\sqrt{q^k}}\sum_{\tv \in \F_q^k} \sum_{\sv' \in \F_q^k} \gamma_{\sv+\tv,\sv'} \ket{\sv' - \tv} \ket{\wpsi_{\sv + \tv,\sv'}}\ket{\tv} \\
			& = \frac{1}{\sqrt{q^k}}\sum_{\sv' \in \F_q^k}\sum_{\tv \in \F_q^k} \gamma_{\sv + \tv,\sv'+\tv}\ket{\sv'} \ket{\wpsi_{\sv + \tv,\sv' + \tv}}\ket{\tv} \\
		\end{align*}
	\end{enumerate}
	If we measure the first register, we obtain $\sv$ with probability $\frac{1}{q^k} \sum_{\tv} |\gamma_{\sv + \tv,\sv + \tv}|^2 = p$ which is independent of $\sv$. If we perform the above algorithm fully coherently, we obtain a quantum unitary $U'$ such that 
	$$ \forall \sv \in \F_q^k, \ U' \ket{\psi_\sv}\ket{0} = \sum_{\sv' \in \F_q^k} \gamma'_{\sv,\sv'} \ket{\sv'}\ket{\wpsi'_{\sv,\sv'}}, \text{ for some unit vectors } \ket{\wpsi_{\sv,\sv'}} \text{ and each } |\gamma'_{\sv,\sv}| = \sqrt{p}.$$
	In order to conclude, we just have to put the potential phases of $\gamma'_{\sv,\sv}$ into the second register so if we define $\ket{\wpsi''_{\sv,\sv'}} = \frac{\gamma_{\sv,\sv'}}{|\gamma_{\sv,\sv'}|}\ket{\wpsi'_{\sv,\sv'}}$, we can indeed write 
	$$  U' \ket{\psi_\sv}\ket{0} = \sum_{\sv' \in \F_q^k} \gamma'_{\sv,\sv'} \ket{\sv'}\ket{\wpsi''_{\sv,\sv'}} \quad \text{with each } \gamma'_{\sv,\sv'} = \sqrt{p}. \qedhere$$
\end{proof}

\subsection{Description of the algorithm}

We present now a detailed description of our algorithm. We assume we have access to a quantum algorithm for $\QDP$ that satisfies the condition of Proposition~\ref{Proposition:Uniform} and we use it to construct an algorithm for $\ICC$.

\begin{tcolorbox}[colback=gray!10!white, colframe=gray!75!black, title=Algorithm 1: Quantum algorithm based on Regev's reduction for \ICC, fonttitle=\bfseries\large, boxrule=1.5pt, arc=4mm, width=\linewidth, before skip=10pt, after skip=10pt]
	
	\textbf{Input:} We start from a $q$-ary code $\C$ of length $k$ and dimension $n$ with a associated generating matrix $\Gm \in \F_q^{k \times n}$. Let $T \subseteq \F_q^n$ and $f : \F_q^n \rightarrow \mathbb{C}$ such that $\norm{f}_2 = 1$. For each $\sv \in \F_q^k$, we write $\ket{\psi_{\sv}} = \sum_{\ev \in \F_q^n} f(\ev)\ket{\sv\Gm + \ev}$.  Assume we have a codeword independent quantum algorithm $\aa$ that solves $\QDP(\C,f)$ with probability $P_{Dec}$. This means we have a quantum unitary 
	$$ U \ket{\psi_\sv}\ket{0} = \sum_{\sv' \in \F_q^k} \gamma_{\sv,\sv'} \ket{\wpsi_{\sv,\sv'}}\ket{\sv'}, \ \ \text{where } \forall \sv \in \F_q^k, \ \gamma_{\sv,\sv} = \sqrt{P_{Dec}}
	\text{ and each } \ket{\wpsi_{\sv,\sv'}} \text{ is a unit vector}. $$
	Finally, we are given a random $\uv \in \F_q^k$. \\
	\textbf{Goal:} Find $\yv \in \C^{\bot}_{\uv} \cap T$, where $\C^\bot_{\uv} = \{\yv \in \F_q^n :  {\Gm}\trsp{\yv} = \trsp{\uv}\}$ \\ \\
	\textbf{Execution of the algorithm:}
	\begin{enumerate}
		\item First construct the state
		$\frac{1}{\sqrt{q^k}} \sum_{\sv \in \F_q^k} \chi_{-\uv}(\sv) \ket{\psi_\sv}\ket{0}\ket{\sv}$.
		\item Perform the operation
		\begin{align*}
			\frac{1}{\sqrt{q^k}} \sum_{\sv \in \F_q^k}\chi_{-\uv}(\sv) \ket{\psi_{\sv}}\ket{0}\ket{\sv} & \overset{\textcircled{\scalebox{0.7}{A}}}{\mathlarger{\mathlarger{\rightarrow}}} \frac{1}{\sqrt{q^k}}\sum_{\sv,\sv' \in \F_q^k} \chi_{-\uv}(\sv) \gamma_{\sv,\sv'} \ket{\wpsi_{\sv,\sv'}}\ket{\sv'}\ket{\sv - \sv'}
		\end{align*}
		Here, $\textcircled{\scalebox{0.7}{A}}$ is done by applying $U$ on the first two registers and then substracting the second register from the third register. 
		\item Measure the third register. If we don't obtain $\mathbf{0^n}$, start again from step $1$. Otherwise, we obtain the state $\frac{1}{\sqrt{q^k}}\sum_{\sv \in \F_q^k} \chi_{-\uv}(\sv) \ket{\wpsi_{\sv,\sv}}\ket{\sv}\ket{0}$.
		\item Discard the third register and apply $U^\dagger$ on the first two registers. The resulting state is
		$$\ket{\Phi_{\uv}} = \frac{1}{\sqrt{q^k}}\sum_{\sv \in \F_q^k} \chi_{-\uv}(\sv)\sqrt{P_{Dec}}\ket{\psi_\sv} + \chi_{-\uv}(\sv)\sqrt{1 - P_{Dec}} \ket{Z_\sv} \ \ \text{ for some unit vector } \ket{Z_\sv} \bot \ket{\psi_\sv}.$$ 
		\item Compute $\ket{\widehat{\Phi_{\uv}}}$ and measure in the computational basis. Output the outcome of the measurement.
	\end{enumerate}
\end{tcolorbox}
\subsection{First analysis and running time of the algorithm}
We first provide some details over each step of the algorithm. 
\begin{enumerate}
	\item The initialization step of the algorithm can be done as follows
	$$ \sum_{\sv \in \F_q^k} \ket{\sv} \otimes \sum_{\ev \in \F_q^k} f(\ev) \ket{\ev} \overset{\textcircled{\scalebox{0.7}{1}}}{\mathlarger{\mathlarger{\rightarrow}}}  \sum_{\substack{\sv \in \F_q^k \\ \ev \in \F_q^n}} f(\ev)\ket{\sv}\ket{\sv \Gm + \ev} = \sum_{\sv \in \F_q^k} \ket{\sv}\ket{\psi_{\sv}},$$
	which corresponds to the initial state by adding a $\ket{\zerov}$ register and reordering. In $\textcircled{\scalebox{0.7}{1}}$, we use the fact that $\sv \rightarrow \sv\Gm$ is easily computable and apply this operation coherently. We then need to compute $\sum_{\ev \in \F_q^n} f(\ev) \ket{\ev}$ which take some time $T_{Sampl}$. In practice, $f$ is chosen such that this state can be computed efficiently. The running time of this is therefore in $O(T_{Sampl} + \poly(n,\log(q)))$.
	\item In step $3$, before the measurement, we have the state 
	\begin{align*}\frac{1}{\sqrt{q^k}}\sum_{\sv,\sv' \in \F_q^k} \chi_{-\uv}(\sv) \gamma_{\sv,\sv'}& \ket{\wpsi_{\sv,\sv'}} \ket{\sv'}\ket{\sv - \sv'} = \\ 
		& \frac{1}{\sqrt{q^k}}\left(\sum_{\sv} \chi_{-\uv}(\sv) \sqrt{P_{Dec}}  \ket{\wpsi_{\sv,\sv}}\ket{\sv}\ket{0} + \sum_{\sv,\sv' \neq \sv} \chi_{-\uv}(\sv) \sqrt{1-P_{Dec}}  \ket{\wpsi_{\sv,\sv'}}\ket{\sv'}\ket{\sv - \sv'}\right).
	\end{align*}
	which means that we successfully measure $\zerov$ in the last register with probability $P_{Dec}$ and that conditioned on this outcome, the resulting state is indeed $\frac{1}{\sqrt{q^k}}\sum_{\sv} \chi_{-\uv}(\sv)   \ket{\wpsi_{\sv,\sv}}\ket{\sv}\ket{0}$. This means we have to repeat steps $1$ to $3$ $O(\frac{1}{P_{Dec}})$ times. Moreover, step $2$ requires to compute $U$, which takes time $T_{Dec}$ which is the running time of the decoder. From there, we conclude that the time required for this algorithm to successfully pass step $3$ is 
	$$O\left(\frac{1}{P_{Dec}} \left(\Time_{Dec} + \Time_{Sampl}\right) + \poly(n,\log(q))\right).$$ 
	\item In order to see step $4$, we start from $U \ket{\psi_\sv}\ket{0} = \sum_{\sv' \in \F_q^k} \gamma_{\sv,\sv'} \ket{\wpsi_{\sv,\sv'}}\ket{\sv'}$ which implies 
	$$ \bra{\wpsi_{\sv,\sv}}\bra{\sv} U \left(\ket{\psi_{\sv}}\ket{\zerov}\right) = \bra{\psi_{\sv}}\bra{\zerov} U^\dagger  \left(\ket{\wpsi_{\sv,\sv}}\ket{\sv}\right) = \gamma_{\sv,\sv} = \sqrt{P_{Dec}}.$$
	This means that for each $\sv \in \F_q^k$,  we can indeed write 
	$$ U^\dagger(\ket{\wpsi_{\sv,\sv}}\ket{\sv}) = \sqrt{P_{Dec}} \ket{\psi_{\sv}} + \sqrt{1 - P_{Dec}} \ket{Z_\sv},$$
	for some unit vector $\ket{Z_{\sv}}$ orthogonal to $\ket{\psi_{\sv}}$, which justifies step $4$ of the algorithm. Finally, in step $5$, we have to perform $n$ quantum Fourier transforms in $F_q$ and measure, which takes time $\poly(n,\log(q))$.
\end{enumerate}
From this analysis, we can conclude that the running time of the algorithm satisfies 
$$\Time = O\left(\frac{1}{P_{Dec}} \left(\Time_{Dec} + \Time_{Sampl}\right) + \poly(n,\log(q))\right).$$
The trickier part will be to argue about the success probability of the algorithm, which is the goal of the following section.

\subsection{Proof of main theorem}
We use the notations as in the beginning of the section. We prove the following
\begin{theorem}[Theorem~\ref{Theorem:Main} restated]
	Let $\C$ be a $q$-ary linear code of dimension $k$ and length $n$ and let $\Gm \in \F_q^{k \times n}$ be a generating matrix of $\C$. Let $T \subseteq \F_q^n$. Let $f : \F_q^n \rightarrow \mathbb{C}$ with $\norm{f}_2 = 1$. Assume that 
	\begin{enumerate}\setlength\itemsep{-0.3em}
		\item We have a (classical or quantum) codeword independent algorithm $\aa_{Dec}$ that solves $\SD(\C,|f|^2)$ in time $\Time_{Dec}$ and succeeds with probability $P_{Dec}$.
		\item The state $\sum_{\ev \in \F_q^n} f(\ev) \ket{\ev}$ is constructible in time $T_{Sampl}$.
		\item $\sum_{\yv \in T}|\hf(\yv)|^2 = 1 - \eta$.
	\end{enumerate}
	Then there exists a quantum algorithm that solves $\ICC(\C^\bot,T)$ with probability
	$$ P \ge P_{Dec}(1-\eta) - 2\sqrt{\eta P_{Dec}(1-P_{Dec})},$$
	and runs in time 
	$$\Time = O\left(\frac{1}{P_{Dec}} \left(\Time_{Dec} + \Time_{Sampl}\right) + \poly(n,\log(q))\right).$$
\end{theorem}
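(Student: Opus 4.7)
The plan is to verify Algorithm~1's running time and then compute the success probability directly, exploiting the orthogonality $\braket{\psi_\sv}{Z_\sv} = 0$ via a Fourier-domain sum that vanishes exactly.

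\textbf{Running time.} Each step costs as advertised: Step~1 takes $\Time_{Sampl} + \poly(n,\log q)$; Step~2 invokes $U$ at cost $\Time_{Dec}$; Step~3 succeeds with probability $P_{Dec}$ by the uniform-amplitude form of $U$ from Proposition~\ref{Proposition:Uniform}, so $O(1/P_{Dec})$ repetitions suffice. Step~4 applies $U^\dagger$ and produces the unit vector $\ket{\Phi_\uv}$ of the stated form via the identity $U^\dagger(\ket{\wpsi_{\sv,\sv}}\ket{\sv}) = \sqrt{P_{Dec}}\ket{\psi_\sv}\ket{0} + \sqrt{1-P_{Dec}}\ket{Z_\sv}$, which follows from $\bra{\wpsi_{\sv,\sv}}\bra{\sv}U\ket{\psi_\sv}\ket{0} = \gamma_{\sv,\sv} = \sqrt{P_{Dec}}$. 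Step~5 is a QFT and measurement in $\poly(n,\log q)$. This yields the claimed total time.

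\textbf{Rewriting the success probability.} Since $\uv$ is uniform in $\F_q^k$ and $\yv \in \C^\bot_\uv$ is equivalent to $\uv = \yv\trsp{\Gm}$, the success probability rewrites as
\[
P = \sum_{\yv \in T}\tfrac{1}{q^k}\bigl|\widehat{\Phi_{\yv\trsp{\Gm}}}(\yv)\bigr|^2.
\]
A standard Fourier computation gives $\widehat{\ket{\psi_\sv}} = \sum_\yv \chi_\yv(\sv\Gm)\hf(\yv)\ket{\yv}$ using Proposition~\ref{prop:characters}. Combining this with $\chi_\yv(\sv\Gm) = \chi_{\yv\trsp{\Gm}}(\sv)$ and character orthogonality over $\sv$ yields
\[
\widehat{\Phi_{\yv\trsp{\Gm}}}(\yv) = \sqrt{P_{Dec}\,q^k}\,\hf(\yv) + \frac{\sqrt{1-P_{Dec}}}{\sqrt{q^k}}\sum_\sv \overline{\chi_\yv(\sv\Gm)}\,\widehat{Z_\sv}(\yv).
\]
Writing $b_\yv := \frac{\sqrt{1-P_{Dec}}}{q^k}\sum_\sv \overline{\chi_\yv(\sv\Gm)}\,\widehat{Z_\sv}(\yv)$, one gets $P = \sum_{\yv \in T}|\sqrt{P_{Dec}}\hf(\yv) + b_\yv|^2$. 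Expanding and dropping the nonnegative $|b_\yv|^2$ term gives
\[
P \ge P_{Dec}(1-\eta) - 2\sqrt{P_{Dec}}\,\Bigl|\sum_{\yv \in T}\overline{\hf(\yv)}\,b_\yv\Bigr|.
\]

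\textbf{The key identity and main obstacle.} Bounding the cross term tightly is the main technical step: a naive Cauchy--Schwarz over $T$ only produces a factor of $\sqrt{1-\eta}$, whereas the theorem demands $\sqrt{\eta}$. The critical observation is that the \emph{full} sum over $\yv$ vanishes because of the orthogonality $\braket{\psi_\sv}{Z_\sv} = 0$:
\[
\sum_\yv \overline{\hf(\yv)}\,b_\yv = \frac{\sqrt{1-P_{Dec}}}{q^k}\sum_\sv \braket{\widehat{\psi_\sv}}{\widehat{Z_\sv}} = \frac{\sqrt{1-P_{Dec}}}{q^k}\sum_\sv \braket{\psi_\sv}{Z_\sv} = 0.
\]
Consequently $\bigl|\sum_{\yv \in T}\overline{\hf(\yv)}\,b_\yv\bigr| = \bigl|\sum_{\yv \notin T}\overline{\hf(\yv)}\,b_\yv\bigr|$, and Cauchy--Schwarz on $T^c$ (where $\hf$ only carries mass $\eta$), combined with the bound $\sum_\yv|b_\yv|^2 \le 1-P_{Dec}$ (obtained by expanding the double sum in $\sv,\sv'$, applying character orthogonality in $\yv$, and using $|\braket{Z_{\sv'}}{S_{(\sv'-\sv)\Gm}Z_\sv}| \le 1$ where $S_\av$ denotes shift by $\av$), yields a cross-term bound of $\sqrt{\eta(1-P_{Dec})}$. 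Assembling, $P \ge P_{Dec}(1-\eta) - 2\sqrt{\eta P_{Dec}(1-P_{Dec})}$, as claimed.
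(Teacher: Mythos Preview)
Your proof is correct and follows essentially the same strategy as the paper: both exploit the orthogonality $\braket{\psi_\sv}{Z_\sv}=0$ to convert the cross-term sum over $T$ into one over $T^c$, and then apply Cauchy--Schwarz there to extract the $\sqrt{\eta}$ factor. The only organizational difference is that the paper bounds the cross term per $\sv$ (using $\|\widehat{Z_\sv}\|_2\le 1$ directly in the Cauchy--Schwarz step), whereas you first aggregate the $\sv$-sum into $b_\yv$ and consequently need the additional estimate $\sum_\yv |b_\yv|^2 \le 1-P_{Dec}$ via the shift-inner-product argument; both routes yield the same final inequality.
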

The running time of the algorithm has been discussed in the previous section so we just need to prove the success probability. 
We fix $\uv \in \F_q^{k}$, and let $p_{\uv}$ be the probability that the algorithm outputs an element $\yv \in \C^\bot_{\uv} \cap T$ given this $\uv$. We write 
$\ket{\widehat{Z_\sv}} = \sum_{\yv \in \F_q^n} z_{\sv,\yv} \ket{\yv}$ and have 
\begin{lemma}
	$p_{\uv} = \sum_{\yv \in \C^\bot_{\uv} \cap T} \left|\sqrt{\Pdec} \sqrt{q^k} \hf(\yv) + \frac{\sqrt{1 - \Pdec}}{\sqrt{q^k}} \sum_{\sv \in \F_q^k} \omega^{- \sv \cdot \uv} \hzsy \right|^2$.
\end{lemma}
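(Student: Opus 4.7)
The plan is to compute the quantum Fourier transform of $\ket{\Phi_{\uv}}$ explicitly, read off the amplitude at each $\yv \in \C^\bot_{\uv} \cap T$, and then take the squared norm summed over these $\yv$. Since the QFT is linear and $\ket{\Phi_{\uv}}$ is a sum of two pieces (the ``good'' piece with weight $\sqrt{\Pdec}$ on the $\ket{\psi_\sv}$'s and the ``error'' piece with weight $\sqrt{1-\Pdec}$ on the $\ket{Z_\sv}$'s), I treat the two contributions separately and combine them at the end.

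The first step is to compute $\widehat{\ket{\psi_\sv}}$. Writing $\ket{\psi_\sv} = \sum_{\xv \in \F_q^n} f(\xv - \sv\Gm)\ket{\xv}$ (via $\xv = \sv\Gm + \ev$) and applying the QFT on $\F_q^n$, a short manipulation using the group-homomorphism and symmetry properties of characters (Proposition~\ref{prop:characters}) gives
$$\widehat{\ket{\psi_\sv}} = \sum_{\yv \in \F_q^n} \chi_{\yv\trsp{\Gm}}(\sv)\, \hf(\yv)\, \ket{\yv}.$$
I then plug this into the first summand of $\ket{\Phi_{\uv}}$, swap the sums over $\sv$ and $\yv$, and collect the $\sv$-dependent factors. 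The inner sum becomes $\sum_{\sv \in \F_q^k} \chi_{\yv\trsp{\Gm} - \uv}(\sv)$, which by the orthogonality of characters equals $q^k$ when $\yv\trsp{\Gm} = \uv$, i.e. $\yv \in \C^\bot_{\uv}$, and vanishes otherwise. This yields an amplitude $\sqrt{\Pdec\, q^k}\, \hf(\yv)$ at every $\yv \in \C^\bot_{\uv}$ and nothing elsewhere.

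For the second summand, I simply unfold the definition $\widehat{\ket{Z_\sv}} = \sum_{\yv} \hzsy \ket{\yv}$ and swap sums, obtaining an amplitude $\tfrac{\sqrt{1-\Pdec}}{\sqrt{q^k}} \sum_{\sv \in \F_q^k} \chi_{-\uv}(\sv)\, \hzsy$ at each $\yv$, which is the same as $\tfrac{\sqrt{1-\Pdec}}{\sqrt{q^k}} \sum_{\sv} \omega^{-\sv \cdot \uv}\, \hzsy$ by the definition of the character. Adding the two contributions, taking the squared magnitude, and restricting the resulting measurement probability to $\yv \in \C^\bot_{\uv} \cap T$ produces exactly the claimed expression for $p_{\uv}$.

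There is no real obstacle beyond careful index bookkeeping: the only things that need attention are making sure the QFT conjugation/phase conventions match the paper's definition of $\hf$ and of $\chi_{-\uv}$, and correctly identifying the support constraint $\yv\trsp{\Gm} = \uv$ with membership in $\C^\bot_{\uv}$. Everything else is linearity of the QFT plus one application of character orthogonality.
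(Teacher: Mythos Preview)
Your proposal is correct and follows essentially the same route as the paper: compute $\widehat{\ket{\psi_\sv}}$ via the character identities to get $\sum_{\yv}\chi_{\yv\trsp{\Gm}}(\sv)\hf(\yv)\ket{\yv}$, sum over $\sv$ against $\chi_{-\uv}(\sv)$ and use character orthogonality to collapse the support to $\C^\bot_{\uv}$, handle the $\ket{Z_\sv}$ term by simply unfolding $\widehat{\ket{Z_\sv}}$, and read off the measurement probability on $\C^\bot_{\uv}\cap T$. The only cosmetic difference is that the paper first groups $\F_q^n$ into dual cosets $\C^\bot_{\uv'}$ before summing over $\sv$, whereas you evaluate $\sum_{\sv}\chi_{\yv\trsp{\Gm}-\uv}(\sv)$ directly; these are the same orthogonality step.
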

\begin{proof}
	In order to compute $p_{\uv}$, we have to compute $\ket{\widehat{\Phi_{\uv}}}$. We have 
	$$\ket{\widehat{\Phi_{\uv}}} = \frac{1}{\sqrt{q^k}}\sum_{\sv \in \F_q^k} \chi_{\uv}(\sv)\sqrt{P_{Dec}} \ket{\widehat{\psi_\sv}} + \chi_{\uv}(\sv)\sqrt{1 - P_{Dec}} \ket{\widehat{Z_\sv}}.$$
	We first write 
	\begin{align*}
		\ket{\widehat{\psi_{\sv}}} & = \frac{1}{\sqrt{q^k}}\sum_{\yv \in \F_q^n}\sum_{\ev \in \F_q^n} \chi_{\yv}(\sv \Gm + \ev) f(\ev)\ket{\yv} \\
		& = \frac{1}{\sqrt{q^k}}\sum_{\yv \in \F_q^n}  \chi_{\yv}(\sv \Gm)  \sum_{\ev \in \F_q^n} \chi_{\yv}(\ev) f(\ev)\ket{\yv} \\
		& = \sum_{\yv \in \F_q^n}  \chi_{\yv \trsp{\Gm}}(\sv)  \hf(\yv) \ket{\yv} \\
		& = \sum_{\uv' \in \F_q^k} \chi_{\uv'}(\sv)  \sum_{\yv \in \C^{\bot}_{\uv'}}  \hf(\yv) \ket{\yv}
	\end{align*}
	which gives
	$$ \sum_{\sv \in \F_q^k} \chi_{-\uv}(\sv) \ket{\widehat{\psi_{\sv}}} = \sum_{\sv \in \F_q^k} \sum_{\uv' \in \F_q^k} \chi_{(\uv' - \uv)}(\sv) \sum_{\yv \in \C^\bot_{\uv'}} \hf(\yv) \ket{\yv} = q^k \sum_{\yv \in \C^\bot_{\uv}} \hf(\yv) \ket{\yv}.$$
	From there, we have 
	$$ \ket{\widehat{\Phi_{\uv}}} = \sqrt{q^k{P_{Dec}}} \sum_{\yv \in \C^\bot_{\uv}} \hf(\yv)\ket{\yv} + \sqrt{\frac{1-P_{Dec}}{q^k}}\sum_{\yv \in \F_q^n} \sum_{\sv \in \F_q^k} \chi_{-\uv}(\sv) z_{\sv,\yv}\ket{\yv}. $$	
	The algorithm computes this state and measures in the computational basis. The probability to output an element of $\C^\bot_{\uv} \cap T$ is therefore 
	$$ p_{\uv} = \sum_{\yv \in \C^\bot_{\uv} \cap T} \left|\sqrt{q^k{P_{Dec}}}\hf(\yv) + \sqrt{\frac{1-P_{Dec}}{q^k}}\sum_{\sv \in \F_q^k} \chi_{-\uv}(\sv) z_{\sv,\yv}\right|^2. \qedhere $$
\end{proof}
We can now proceed to the main proof of this section
\begin{proposition}
	$P = \E_{\uv \Unif \F_q^k} \left[p_{\uv}\right] \ge P_{Dec}(1-\eta) - 2\sqrt{\eta P_{Dec}(1-P_{Dec})}$. 
\end{proposition}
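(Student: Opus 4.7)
The plan is to upper-bound $1 - p_{\uv}$ rather than directly lower-bound $p_{\uv}$, because the ``ideal'' component of $\widehat{\Phi_{\uv}}$ is already supported inside $\C^\bot_{\uv}$, so its mass outside $T$ equals the small quantity $\eta$ on average (whereas its mass \emph{inside} $T$ is large). Concretely, decompose $\ket{\widehat{\Phi_{\uv}}} = \sqrt{\Pdec}\,\ket{A_{\uv}} + \sqrt{1-\Pdec}\,\ket{B_{\uv}}$ with
\[\ket{A_{\uv}} := \sqrt{q^k}\sum_{\yv \in \C^\bot_{\uv}} \hf(\yv)\ket{\yv}, \qquad \ket{B_{\uv}} := \frac{1}{\sqrt{q^k}}\sum_{\yv \in \F_q^n}\sum_{\sv \in \F_q^k}\chi_{-\uv}(\sv)\,\hzsy\,\ket{\yv},\]
so that $\ket{\widehat{\Phi_{\uv}}}$ is a unit vector. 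Letting $\Pi_{\uv}^{\perp}$ be the projector on $\{\ket{\yv} : \yv \notin \C^\bot_{\uv}\cap T\}$, we have $1 - p_{\uv} = \|\Pi_{\uv}^{\perp}\widehat{\Phi_{\uv}}\|^2$, and the crucial observation is that $\ket{A_{\uv}}$ lives entirely in $\C^\bot_{\uv}$, so $\|\Pi_{\uv}^{\perp}A_{\uv}\|^2 = a_{\uv} := q^k\sum_{\yv \in \C^\bot_{\uv}\setminus T}|\hf(\yv)|^2$.

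Setting $b_{\uv} := \|\Pi_{\uv}^{\perp}B_{\uv}\|^2 \le \|B_{\uv}\|^2$, the triangle inequality applied to $\Pi_{\uv}^{\perp}\widehat{\Phi_{\uv}} = \sqrt{\Pdec}\,\Pi_{\uv}^{\perp}A_{\uv} + \sqrt{1-\Pdec}\,\Pi_{\uv}^{\perp}B_{\uv}$, followed by squaring, gives
\[ 1 - p_{\uv} \le \Pdec\,a_{\uv} + (1-\Pdec)\,b_{\uv} + 2\sqrt{\Pdec(1-\Pdec)\,a_{\uv}\,b_{\uv}}.\]
I then average over $\uv$. Since the cosets $\{\C^\bot_{\uv}\}_{\uv}$ partition $\F_q^n$, assumption~3 yields $\E_{\uv}[a_{\uv}] = \sum_{\yv \notin T}|\hf(\yv)|^2 = \eta$. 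A Parseval identity for the characters of $\F_q^k$ gives $\sum_{\uv}\bigl|\sum_{\sv}\chi_{-\uv}(\sv)\hzsy\bigr|^2 = q^k \sum_{\sv}|\hzsy|^2$; summing over $\yv$ and using that each $\ket{\widehat{Z_\sv}}$ is a unit vector gives $\sum_{\uv}\|B_{\uv}\|^2 = \sum_{\sv}\|\widehat{Z_\sv}\|^2 = q^k$, hence $\E_{\uv}[b_{\uv}] \le 1$. Finally, Cauchy--Schwarz controls the cross term: $\E_{\uv}[\sqrt{a_{\uv}b_{\uv}}] \le \sqrt{\E_{\uv}[a_{\uv}]\,\E_{\uv}[b_{\uv}]} \le \sqrt{\eta}$. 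Combining yields $\E_{\uv}[1 - p_{\uv}] \le \Pdec\,\eta + (1-\Pdec) + 2\sqrt{\eta\,\Pdec(1-\Pdec)}$, which rearranges to the stated inequality.

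The main subtlety is the choice to work with the complement $1 - p_{\uv}$: a direct Cauchy--Schwarz on $p_{\uv}$ would introduce $\E_{\uv}[\|\Pi_{\uv}^{+}A_{\uv}\|^2] = 1 - \eta$ and leave a useless $\sqrt{1-\eta}$ factor instead of $\sqrt{\eta}$ (which is the interesting regime, as $\eta$ will be negligible). The complement formulation instead isolates the small out-of-$T$ mass of $\ket{A_{\uv}}$ and needs only a crude total-mass bound on $\ket{B_{\uv}}$, so the Cauchy--Schwarz cross term becomes genuinely small.
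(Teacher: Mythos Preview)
Your proof is correct and takes a genuinely different route from the paper's. The paper lower-bounds $p_{\uv}$ directly: it expands $|a+b|^2 = |a|^2 + |b|^2 + 2\mathrm{Re}(a\bar b)$, drops the nonnegative $|b|^2$ term, and then bounds the cross term via a key lemma that crucially uses the orthogonality $\braket{Z_\sv}{\psi_\sv}=0$ to rewrite $\sum_{\yv\in\C^\bot_{\uv}\cap T}\hf(\yv)\chi_{\uv}(\sv)\ohzsy$ as (minus) the corresponding sum over $\bar T$, where Cauchy--Schwarz against $\sum_{\yv\notin T}|\hf(\yv)|^2=\eta$ gives the $\sqrt\eta$. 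You instead pass to the complement $1-p_{\uv}$, apply the triangle inequality, and average; the $\sqrt\eta$ then comes from $\E_{\uv}[a_{\uv}]=\eta$ combined with $\E_{\uv}[b_{\uv}]\le 1$ via Cauchy--Schwarz on the \emph{expectations}. Your argument is more elementary in that it never uses $\braket{Z_\sv}{\psi_\sv}=0$---only that each $\ket{Z_\sv}$ is a unit vector---so it would apply even to a non-orthogonal decomposition of $U^\dagger(\ket{\wpsi_{\sv,\sv}}\ket{\sv})$. The paper's route, on the other hand, keeps the analysis per-$\sv$ and extracts slightly more structure (the lemma bounds each $\sv$ individually), which could in principle be sharper if one wanted finer control of the error term. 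Both arrive at exactly the same bound.
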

\begin{proof}
	We write 
	\begin{align*}
		p_{\uv} & = \sum_{\yv \in \C^\bot_{\uv} \cap T} \left|\sqrt{\Pdec} \sqrt{q^k} \hf(\yv) + \frac{\sqrt{1 - \Pdec}}{\sqrt{q^k}} \sum_{\sv \in \F_q^k} \chi_{-\uv}(\sv) \hzsy \right|^2 \\
		& = \sum_{\yv \in \C^\bot_{\uv} \cap T} \left({\Pdec}{q^k} |\hf(\yv)|^2 + \frac{{1 - \Pdec}}{{q^k}} \left|\sum_{\sv \in \F_q^k} \chi_{-\uv}(\sv) \hzsy \right|^2  + 2Re\left(\sqrt{\Pdec} \sqrt{q^k} \hf(\yv)\frac{\sqrt{1 - \Pdec}}{\sqrt{q^k}} \sum_{\sv \in \F_q^k} \chi_{\uv}(\sv) \ohzsy\right)\right)
	\end{align*}
	where we used $|a+b|^2 = (a+b)(\overline{a} + \overline{b}) = |a|^2 + |b|^2 + 2Re(a\overline{b})$. We now bound each term separately. We first write 
	\begin{align*}
		\E_{\uv \Unif \F_q^k} \left[\sum_{\yv \in \C^\bot_{\uv} \cap T} \left({\Pdec}{q^k} |\hf(\yv)|^2\right)\right] & = \Pdec(1 - \eta) \\
		\forall \uv \in \F_q^k, \ \sum_{\yv \in \C^\bot_{\uv} \cap T} \left(\frac{{1 - \Pdec}}{{q^k}} \left|\sum_{\sv \in \F_q^k} \chi_{-\uv}(\sv)\hzsy \right|^2\right) & \ge 0 \\
		\sum_{\yv \in \C^\bot_{\uv} \cap T} \left(2Re\left(\sqrt{\Pdec} \sqrt{q^k} \hf(\yv)\frac{\sqrt{1 - \Pdec}}{\sqrt{q^k}} \sum_{\sv \in \F_q^k} \chi_{\uv}(\sv)\ohzsy\right)\right) & = 2\sqrt{\Pdec(1-\Pdec)} Re\left(\sum_{\substack{\yv \in \C^\bot_{\uv} \cap T \\ \sv \in \F_q^k}} \hf(\yv)\chi_{\uv}(\sv)\ohzsy\right)
	\end{align*}
	
	From there, we write 
	\begin{align}\label{Eq:FinalProof} p = \E_{\uv \Unif \F_q^k} \left[p_{\uv}\right] \ge \Pdec(1 - \eta) + 2\sqrt{\Pdec(1-\Pdec)} \E_{\uv \Unif \F_q^k} \left[Re\left(\sum_{\substack{\yv \in \C^\bot_{\uv} \cap T \\ \sv \in \F_q^k}} \hf(\yv)\chi_{\uv}(\sv) \ohzsy\right)\right]. \end{align}
	In order to conclude, we prove the following lemma
	\begin{lemma}
		$$\forall \sv \in \F_q^k, \  \left|\sum_{\uv \in \F_q^k}\sum_{\yv \in \C^\bot_{\uv}} \hf(\yv)\chi_{\uv}(\sv)\ohzsy\right| \le \sqrt{\eta}.$$
	\end{lemma}
	\begin{proof}
		We start from the equality $\braket{Z_\sv}{\psi_\sv} = \braket{\widehat{Z_\sv}}{\widehat{\psi_\sv}} = 0$ for each $\sv \in \F_q^k$, which can be rewritten
		\begin{align*}
			\sum_{\uv \in \F_q^k} \sum_{\yv \in \C^\bot_{\uv}} \chi_{\uv}(\sv)\hf(\yv)\ohzsy = 0.
		\end{align*}
		This implies 
		\begin{align*}
			\left|\sum_{\uv \in \F_q^k} \sum_{\yv \in \C^\bot_{\uv} \cap T} \chi_{\uv}(\sv)\hf(\yv)\ohzsy\right| & = \left|\sum_{\uv \in \F_q^k} \sum_{\yv \in \C^\bot_{\uv} \cap \overline{T}} \chi_{\uv}(\sv)\hf(\yv)\ohzsy\right| \\
			& = \left|\sum_{\yv \notin T} \chi_{\yv \trsp{\Gm}}(\sv) \hf(\yv)\ohzsy\right|\\
			& \le \sqrt{\sum_{\yv \notin T} |\chi_{\yv \trsp{\Gm}}(\sv) \hf(\yv)|^2}\sqrt{\sum_{\yv \notin T} |\hzsy|^2} \\
			& \le \sqrt{\eta}\sqrt{1} = \sqrt{\eta},
		\end{align*}
		where we used the fact that $\ket{Z_\sv}$ is a unit vector.
	\end{proof}
	We can now conclude our main proof. We have 
	\begin{align*}
		\E_{\uv \Unif \F_q^k} \left[Re\left(\sum_{\substack{\yv \in \C^\bot_{\uv} \cap T \\ \sv \in \F_q^k}} \hf(\yv)\chi_{\uv}(\sv)\ohzsy\right)\right] & = \frac{1}{q^k} \sum_{\sv \in \F_q^k}  Re\left(\sum_{\uv \in \F_q^k} \sum_{\yv \in \C^\bot_{\uv}} \hf(\yv)\chi_{\uv}(\sv) \ohzsy\right) \\
		& \ge  -\frac{1}{q^k} \sum_{\sv \in \F_q^k}  \left|\sum_{\uv \in \F_q^k} \sum_{\yv \in \C^\bot_{\uv}} \hf(\yv)\chi_{\uv}(\sv) \ohzsy\right| \\
		& \ge -\frac{1}{q^k} \sum_{\sv \in \F_q^k} \sqrt{\eta} \\
		& = -\sqrt{\eta}
	\end{align*} 
	Plugging this inequality into Equation~\ref{Eq:FinalProof}, we get 
	$$ P \ge \Pdec(1-\eta) - 2\sqrt{\Pdec(1-\Pdec)}\sqrt{\eta}.$$ 
\end{proof}

	\section{Proof of Lemma~\ref{Lemma:KV}}\label{Appendix:KV}
	\begin{proposition}
		Let $u : \F_q \rightarrow \mathbb{C}$ be the function such that \[ \hu(\alpha) = \left\{
		\begin{array}{cl}
			\sqrt{\frac{\tau}{|S|}} & \textrm{ if } \alpha \in S \\
			\sqrt{\frac{1 - \tau}{q - |S|}} & \textrm{ if } \alpha \notin S
		\end{array}\right.\]
		We have 
		\[ \sum_{\alpha} (u \star u)^2(\alpha) = \sum_{\alpha} |u(\alpha)|^4 = U(\tau,\rho)\]
	\end{proposition}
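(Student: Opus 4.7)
The plan is to compute $u$ explicitly by Fourier-inverting the given $\hu$, and then to square pointwise. I would start from the natural additive decomposition
\[ \hu \;=\; B\,\mathbf{1}_{\F_q} \;+\; A\,\mathbf{1}_S, \qquad A \;=\; \sqrt{\tfrac{\tau}{2z+1}} - B,\ \ B \;=\; \sqrt{\tfrac{1-\tau}{q-(2z+1)}},\]
matching the constants $A$ and $B$ defined in Theorem~\ref{Theorem:5}. Inverting the Fourier transform, and using that $\mathbf{1}_{\F_q}$ Fourier-inverts to $\sqrt q\,\delta_0$ while $\mathbf{1}_S$ Fourier-inverts to $\widehat{\mathbf{1}_S}$ (because $S=\Iint{-z}{z}=-S$ is symmetric), this gives
\[ u(\alpha) \;=\; A\,F(\alpha) \;+\; B\sqrt q\,\delta_0(\alpha), \qquad F \;:=\; \widehat{\mathbf{1}_S}.\]
Since $F$ is real, so is $u$, hence $|u|^2 = u^2$ and no complex conjugation ever intervenes.

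Next I would square using $\delta_0^2=\delta_0$ and $F\,\delta_0=F(0)\,\delta_0$ with $F(0)=(2z+1)/\sqrt q$. This gives $u^2 = A^2 F^2 + \Gamma\,\delta_0$ with $\Gamma = 2AB(2z+1) + qB^2$, matching the $\Gamma$ of the theorem. Squaring once more and summing yields
\[ \sum_\alpha |u(\alpha)|^4 \;=\; A^4\,\norm{F^2}_2^2 \;+\; 2A^2\Gamma\,F(0)^2 \;+\; \Gamma^2 \;=\; A^4\,\norm{F^2}_2^2 \;+\; 2A^2\Gamma\,\rho^2 q \;+\; \Gamma^2.\]
The last two terms already coincide with the statement of Theorem~\ref{Theorem:5} in the $\rho\le 1/2$ case, and for $\rho\ge 1/2$ a short expansion shows that the middle expression from the statement simplifies to the same $2A^2\Gamma\,\rho^2 q$ under the intended identification $l = q-(2z+1)$.

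All that remains is to compute $\norm{F^2}_2^2$. By the convolution theorem, $F^2 = \widehat{\mathbf{1}_S}^2$ is the Fourier transform of $\tfrac{1}{\sqrt q}(\mathbf{1}_S \star \mathbf{1}_S)$, so by Parseval
\[ \norm{F^2}_2^2 \;=\; \tfrac{1}{q}\,\norm{\mathbf{1}_S \star \mathbf{1}_S}_2^2.\]
The autoconvolution $C(x):=(\mathbf{1}_S\star\mathbf{1}_S)(x) = |S\cap(x-S)|$ is a standard combinatorial quantity. For $\rho\le 1/2$ no wraparound occurs in $\F_q$, and $C(x)=\max(0,(2z+1)-|x|)$, so $\norm{C}_2^2 = \sum_{|x|\le 2z}\bigl((2z+1)-|x|\bigr)^2 = \tfrac{(2z+1)(2(2z+1)^2+1)}{3}$, whose leading-order quotient by $q$ is $\tfrac{2\rho^3 q^2}{3}$, as required. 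For $\rho\ge 1/2$ I would use the complement trick: writing $\mathbf{1}_S=\mathbf{1}_{\F_q}-\mathbf{1}_{S^c}$ gives $\mathbf{1}_S \star \mathbf{1}_S = (2m-q)\,\mathbf{1}_{\F_q} + \mathbf{1}_{S^c}\star\mathbf{1}_{S^c}$ with $m=2z+1$, and since $|S^c|=q-m<q/2$, the inner autoconvolution reduces to the previous case applied to $S^c$.

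The main obstacle is purely arithmetic bookkeeping in the $\rho\ge 1/2$ case: one expands $\norm{(2m-q)\mathbf{1}_{\F_q}+\mathbf{1}_{S^c}\star\mathbf{1}_{S^c}}_2^2$ into the constant contribution $(2m-q)^2 q$, the cross term $2(2m-q)(q-m)^2$ (using $\sum_x (\mathbf{1}_{S^c}\star\mathbf{1}_{S^c})(x) = (q-m)^2$), and the square $\tfrac{(q-m)(2(q-m)^2+1)}{3}$, and must then verify that the leading $q^3$-coefficient matches $\rho^2\bigl(\tfrac{10\rho}{3}-4+\tfrac{2}{\rho}-\tfrac{1}{3\rho^2}\bigr)$ as appearing in the theorem. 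Modulo this combinatorial check, the identity $\sum_\alpha |u(\alpha)|^4 = U(\tau,\rho)$ falls out directly from the Fourier-side decomposition above.
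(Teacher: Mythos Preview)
Your approach is correct and is essentially the Fourier-dual of the paper's computation. Both start from the same decomposition $\hu = A\,\one_S + B\,\one_{\F_q}$, but the paper stays on the Fourier side and computes $\hu \star \hu = A^2(\one_S \star \one_S) + \Gamma$ directly, then uses $\sum_\alpha |u(\alpha)|^4 = \tfrac{1}{q}\sum_\alpha (\hu \star \hu)^2(\alpha)$; you instead invert to $u = A\,\widehat{\one_S} + B\sqrt{q}\,\delta_0$, square pointwise, and go back via Parseval to $\|\one_S \star \one_S\|_2^2$. The two routes meet at the same combinatorial quantity $\sum_\alpha (\one_S \star \one_S)^2(\alpha)$, so neither is materially shorter. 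One genuine difference: for $\rho \ge \tfrac{1}{2}$ the paper computes the autoconvolution piecewise (tent with a plateau of height $2z+1-l$), whereas your complement trick $\one_S = \one_{\F_q} - \one_{S^c}$ reduces cleanly to the $\rho \le \tfrac{1}{2}$ case and is arguably tidier. Your observation that the middle term always collapses to $2A^2\Gamma\,\rho^2 q$ (since $\sum_\alpha (\one_S \star \one_S)(\alpha) = |S|^2$ identically) is also a simplification over the paper's case-by-case presentation of that term.
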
 
	\begin{proof}
		We write $\hu = A \one_{\Iint{-z}{z}} + B \one_{\F_q}$ with $B = \sqrt{\frac{1 - \tau}{q - |S|}}$ and $A = \sqrt{\frac{\tau}{|S|}} - B$. We then write 
		\[ \hu \star \hu = A^2 (\one_{\Iint{-z}{z}} \star \one_{\Iint{-z}{z}}) + AB (\one_{\Iint{-z}{z}} \star \one_{\F_q}) + AB (\one_{F_q} \star \one_{\Iint{-z}{z}}) + B^2 (\one_{\F_q} \star \one_{\F_q})\]
		Now, notice that 
		\[ (\one_{\Iint{-z}{z}} \star \one_{\F_q}) = (\one_{F_q} \star \one_{\Iint{-z}{z}}) = (2z+1) \one_{\F_q},\]
		and $(\one_{\F_q} \star \one_{\F_q}) = q \one_{\F_q}$. From there, we write 
		\[ (\hu \star \hu)(\alpha) = A^2 (\one_{\Iint{-z}{z}} \star \one_{\Iint{-z}{z}})(\alpha) + 2AB(2z+1) + qB^2.\]
		We define $\Gamma = 2AB(2z+1) + qB^2$. We can now write 
		\[
		(\hu \star \hu)^2(\alpha) = A^4  (\one_{\Iint{-z}{z}} \star \one_{\Iint{-z}{z}})^2(\alpha) + 2A^2\Gamma (\one_{\Iint{-z}{z}} \star \one_{\Iint{-z}{z}})(\alpha) + \Gamma^2.\] 
		which gives us 
		\[
		\sum_{\alpha \in \F_q}  (\hu \star \hu)^2(\alpha) = A^4  \left(\sum_{\alpha \in \F_q} (\one_{\Iint{-z}{z}} \star \one_{\Iint{-z}{z}})^2(\alpha)\right) + 2A^2\Gamma \left(\sum_{\alpha \in \F_q} (\one_{\Iint{-z}{z}} \star \one_{\Iint{-z}{z}})(\alpha)\right) + q\Gamma^2.\]
		In order to conclude we use the analysis of~\cite{CT25}. Let $\rho = \frac{|S|}{q} = \frac{2z+1}{q}$. We distinguish two cases
		\paragraph{Case 1: $\rho \le \frac{1}{2}$.} In this case, we have 
		\begin{align*}
			\left(\one_{\Iint{-z}{z}} \star \one_{\Iint{-z}{z}}\right)(\alpha) = \left\{
			\begin{array}{cl}
				2z + 1 - |\alpha| & \textrm{ if } \alpha \in \Iint{-2z}{2z} \\
				0 & \textrm{ otherwise }
			\end{array}\right.
		\end{align*}
		which gives
		\begin{align*}
			\sum_{\alpha \in \F_q} \left(\one_{\Iint{-z}{z}} \star \one_{\Iint{-z}{z}}\right)(\alpha) = 2z+1 + 2\sum_{\alpha = 1}^{2z} (2z+1 - \alpha) = (2z+1) + {2z(2z+1)} = (2z+1)^2 = \rho^2q^2.
		\end{align*}
		Moreover, we know from~\cite{CT25} that $\sum_{\alpha \in \F_q} \left(\one_{\Iint{-z}{z}} \star \one_{\Iint{-z}{z}}\right)^2(\alpha) \ge \frac{2 \rho^3 q^3}{3}$. From there, we can conclude 
		\[
		\sum_{\alpha \in \F_q} (\hu \star \hu)^2(\alpha) \ge A^4   \frac{2 \rho^3 q^3}{3} + 2A^2\Gamma \rho^2q^2 + q\Gamma^2.\]
		and 
		\begin{align}
			\sum_{\alpha \in \F_q} |u(\alpha)|^4 = \frac{1}{q} \sum_{\alpha \in \F_q} (\hu \star \hu)^2(\alpha) \ge A^4   \frac{2 \rho^3 q^2}{3} + 2A^2\Gamma \rho^2q + \Gamma^2.
		\end{align}	
		\paragraph{Case 2: $\rho \ge \frac{1}{2}$.} Let $l = q - (2z+1).$ We have 
		\begin{align*}
			\left(\one_{\Iint{-z}{z}} \star \one_{\Iint{-z}{z}}\right)(\alpha) = \left\{
			\begin{array}{cl}
				2z + 1 - |\alpha| & \textrm{ if } \alpha \in \Iint{-l}{l} \\
				2z + 1 - l & \textrm{ otherwise }
			\end{array}\right.
		\end{align*}
		which gives 
		\begin{align*}
			\sum_{\alpha \in \F_q} \left(\one_{\Iint{-z}{z}} \star \one_{\Iint{-z}{z}}\right)(\alpha) & = 2z+1 + 2\left(\sum_{\alpha = 1}^{l} (2z+1 - \alpha)\right) + (2z+1-l)(q-2l-1) \\
			& = 2z+1 +  {l(4z+1-l)} + (2z+1-l)(q-2l-1)
		\end{align*} 
		Moreover, we know from~\cite{CT25} that 
		\[\sum_{\alpha \in \F_q} \left(\one_{\Iint{-z}{z}} \star \one_{\Iint{-z}{z}}\right)^2(\alpha) \ge q^3 \rho^2 \left(\frac{10\rho}{3} - 4 + \frac{2}{\rho} - \frac{1}{3\rho^2}\right).\]
		which gives 
		\begin{align*}
		\sum_{\alpha \in \F_q} (\hu \star \hu)^2(\alpha) \ge A^4 \left(q^3 \rho^2 \left(\frac{10\rho}{3} - 4 + \frac{2}{\rho} - \frac{1}{3\rho^2}\right)\right) + 2A^2\Gamma\left(2z+1 +  {l(4z+1-l)} + (2z+1-l)(q-2l-1)\right) + q\Gamma^2
		\end{align*}
		and hence 
		$$ 
		\sum_{\alpha \in \F_q} |u(\alpha)|^4 =  A^4 \left(q^2 \rho^2 \left(\frac{10\rho}{3} - 4 + \frac{2}{\rho} - \frac{1}{3\rho^2}\right)\right) + \frac{2A^2\Gamma}{q}\left(2z+1 +  {l(4z+1-l)} + (2z+1-l)(q-2l-1)\right) + \Gamma^2.$$
	\end{proof}

\COMMENT{
	\begin{figure}[h]
		\includegraphics[width = 0.95\textwidth]{aplot2.png}
	\end{figure}
	\begin{figure}[h]
		\includegraphics[width = 0.95\textwidth]{aplot3.png}
	\end{figure}
}
\end{appendix}
\end{document}